\documentclass[11pt]{article}
\usepackage{geometry}
\geometry{margin=2cm}

\RequirePackage{amsthm,amsmath,amsfonts,amssymb}
\RequirePackage[authoryear]{natbib}
\RequirePackage[colorlinks,citecolor=blue,urlcolor=blue]{hyperref}
\RequirePackage{graphicx}

\RequirePackage{bm}
\RequirePackage{multirow}
\RequirePackage{lscape}
\RequirePackage{algorithm}
\RequirePackage{algcompatible} 

\numberwithin{equation}{section}

\newtheorem{theorem}{Theorem}[section]
\newtheorem{lemma}[theorem]{Lemma}
\newtheorem{corollary}[theorem]{Corollary}
\theoremstyle{remark}

\newtheorem{assumption}{Assumption}
\renewcommand{\hat}{\widehat}
\newcommand{\argmin}{\arg\min}
\newcommand{\mR}{\mathbb{R}}

\newcommand{\E}{{E}}
\newcommand{\PP}{{P}}

\newcommand{\OP}{\mathcal{O}_{P}}
\newcommand{\oP}{o_{P}}

\newcommand{\dell}{\dot{\ell}}
\newcommand{\ddell}{\ddot{\ell}}
\newcommand{\hbeta}{\widehat{\beta}}
\newcommand{\hTheta}{\widehat{\Theta}}
\newcommand{\hSigma}{\widehat{\Sigma} }

\newcommand{\Thetabeta}{\Theta_{\beta^0}}
\newcommand{\hmu}{\widehat{\mu}}
\newcommand{\heta}{\widehat{\eta}}

\newcommand{\lu}[1]{\textcolor{black}{#1}} %


\title{\bfseries \Large De-biased lasso for stratified Cox models with application to the national kidney transplant data}
\author{\large Lu Xia$^a$, Bin Nan$^b$ and Yi Li$^c$}
\date{\small $^a$ Department of Biostatistics, University of Washington, xialu@uw.edu \\
	$^b$ Department of Statistics, University of California, Irvine, nanb@uci.edu \\
	$^c$ Department of Biostatistics, University of Michigan, yili@umich.edu}

\begin{document}


\maketitle

\begin{abstract}
		The Scientific Registry of Transplant Recipients (SRTR) system has become a rich resource for understanding the complex mechanisms of graft failure after kidney transplant, a crucial step for allocating organs effectively and implementing appropriate care. 
		As transplant centers that treated patients might strongly confound graft failures, Cox models stratified by centers can eliminate their confounding effects. Also, 
		since recipient age is a proven non-modifiable risk factor, a common practice is to fit models separately by recipient age groups. The  moderate sample sizes, relative to the number of covariates, in some age groups may lead to biased maximum stratified partial likelihood estimates and unreliable confidence intervals even when samples still outnumber covariates.
		 To draw reliable inference on a comprehensive list of risk factors measured from both donors and recipients in SRTR, we propose a de-biased lasso approach via quadratic programming 
		 for fitting stratified Cox models. We  establish asymptotic properties and verify via simulations that our method produces consistent estimates and confidence intervals with nominal coverage probabilities.
		Accounting for nearly 100 confounders in SRTR, the de-biased method detects that the graft failure hazard nonlinearly increases with donor's age among all recipient age groups, and that organs from older donors more adversely impact the younger recipients. 
		Our method also delineates the associations between graft failure and many risk factors such as recipients' primary  diagnoses  (e.g. polycystic disease, glomerular disease, and diabetes) and donor-recipient mismatches for human leukocyte antigen loci across recipient age groups. These results may inform the refinement of donor-recipient matching criteria for stakeholders.\\[0.5cm]
		\textbf{Keywords:} Confidence intervals; diverging number of covariates; end-stage renal disease; graft failure free survival; statistical inference.
\end{abstract}



\section{Introduction}
	\label{sec:intro}

	
	For patients with end-stage renal disease, one of the most lethal and
	prevalent diseases in the U.S. \citep{saran2020us}, successful renal transplantation is  effective for  improving  quality of life and prolonging survival   \citep{wolfe1999comparison,kostro2016quality,ju2019patient}. The success of kidney transplantation hinges upon  various factors related to the quality of transplant operations, the quality of donated kidneys, and the physical conditions of recipients \citep{rodger2012approach,legendre2014factors}, and  it is crucial to evaluate and understand how these risk factors impact on renal graft failure in order to increase the chance of success \citep{hamidi2016identifying,legendre2014factors}. With the scarcity of organs and an increasing number of waitlisted candidates \citep{bastani2015worsening}, the results can inform more efficient strategies for kidney allocation \citep{rao2009alphabet,smith2012kidney} as well as  evidence-based post-transplant care \citep{baker2017renal}.
	Therefore, how to quantify the impacts of important factors associated with prognosis,  particularly renal graft failure, 
	remains to be a central question in kidney transplantation.
	The Scientific Registry of Transplant Recipients (SRTR) system, a federally funded organization that keeps records of transplant information from  recipients and donors, 
 has become a rich resource for studying post-kidney transplantation prognosis \citep{dickinson2008srtr}.

	Leveraging the SRTR data, one can develop a valid tool for  characterizing the influences of risk factors on graft failure,  a  key step towards post-transplant prognosis. Most previous studies, which focused only on a small number of factors, i.e. kidney diagnosis, recipient age, recipient race, recipient gender, number of human leukocyte antigen (HLA) mismatches,  donor age, donor race, donor gender, serum creatinine level and cold ischemia time \citep{alexander1994effect}, might  have pre-excluded other important factors and not fully captured the complex mechanisms governing graft failure. The SRTR data contain comprehensive information on recipients and donors, such as  recipient primary insurance and employment, procedure type, infection of multiple viruses, history of transplant, transfusion and drug abuse, and pre-transplant comorbidities. The data  provide a unique opportunity for assessing the associations between graft failure and an extended list of variables simultaneously, which may reduce confounding \citep{wang2011gee}. Specifically, since donor age 
	is a major criterion for donor-recipient matching \citep{kasiske2002matching,rao2009comprehensive,veroux2012age}, the data enable us to examine its effect on graft failure by adjusting for confounders, including pre-existing comorbidities.

	There are several statistical challenges. 
	On the one hand, as recipients received care in various transplant centers, the center-specific effects may confound the covariate effects of interest. 
	This motivates us
	to consider Cox models stratified by transplant centers, a commonly used model in the relevant context \lu{without the need to explicitly model the potentially time-varying center effects  \citep{he2021stratified}}.
	On the other hand, 
	recipient age is a strong risk factor and there may exist
	complicated interactions between recipients' age and other characteristics \citep{keith2004effect}.
	For ease of interpretation and by convention \citep{morales2012risk,faravardeh2013predictors},   we have opted to divide our analyzable patient population (the adult recipients with 
	kidneys transplanted during 2000 and 2001) into  $[18, 45]$, $(45, 60]$ and $60+$ years old groups (Table \ref{tab:character}), and
	fit models separately for these three groups. Allowing model parameters to be age group-specific, we have avoided parametrically modeling the interactions between recipient age and the other risk factors.
	When the number of covariates is relatively large 
	(94 in our data)  compared to, though still less than, the sample size
	(for example, 1448 patients with 1013 events in the $60+$ years old recipient group), the conventional maximum stratified partial likelihood estimation (MSPLE) may yield untrustworthy point estimates,   confidence intervals and hypothesis testing results, as illustrated in our simulations.

		\begin{table}[ht]
		\centering
		\caption{Study population characteristics by recipient age group} 
		\begin{tabular}{llll}
			\hline
			Recipient age group & $[18, 45]$ & $(45, 60]$ & $60+$ \\
			Variable &  \multicolumn{3}{c}{Mean (SD) / Count (\%) }  \\ \hline
			\# Centers  & 84 (--) & 107 (--)  & 43 (--)  \\
			\# Patients & 3388 (100\%) & 4359 (100\%) & 1448 (100\%) \\
			\# Events & 1588 (46.9\%) & 2334 (53.5\%)  & 1013 (70.0\%)  \\
			Recipient age & 35.7 (7.0) & 53.0 (4.2)  &  66.6 (4.3)  \\
			Donor age (years) &   &   &  \\
			$\qquad$ $\le 10$        & 276 (8.1\%) & 223 (5.1\%) & 61 (4.2\%) \\
			$\qquad$ $(10, 20]$    & 580 (17.1\%)  & 611 (14.0\%)  & 137 (9.5\%) \\	
			$\qquad$ $(20, 30]$    & 633 (18.7\%) & 683 (15.7\%) & 179 (12.4\%) \\
			$\qquad$ $(30, 40]$    & 505 (14.9\%)  & 599 (13.7\%)  & 174 (12.0\%) \\
			$\qquad$ $(40, 50]$    & 753 (22.2\%) & 947 (21.7\%)  & 256 (17.7\%) \\
			$\qquad$ $(50, 60]$    & 498 (14.7\%) & 893 (20.5\%) & 318 (22.0\%) \\
			$\qquad$ $60+$    & 143 (4.2\%) & 403 (9.2\%) & 323 (22.3\%) \\
			Recipient gender &   &   &  \\
			$\qquad$ Male        & 1997 (58.9\%) & 2671 (61.3\%)  & 913 (63.1\%)  \\
			$\qquad$ Female        & 1391 (41.1\%) & 1688 (38.7\%)  & 535 (36.9\%)   \\
			Donor gender &   &   &  \\
			$\qquad$ Male        & 2039 (60.2\%) & 2563 (58.8\%)  & 803 (55.5\%)  \\
			$\qquad$ Female        & 1349 (39.8\%) & 1796 (41.2\%) & 645 (44.5\%)  \\
			\hline
		\end{tabular}
		\label{tab:character}
	\end{table}


	For proper inferences, we consider an asymptotic framework with a diverging number of covariates, wherein the number of covariates, though smaller than the sample size, can increase with the sample size \citep{he2000parameters,wang2011gee}. 
	 \lu{Lasso provides a very popular tool for simultaneous variable selection and estimation with high-dimensional covariates \citep{tibshirani1997lasso}.  For unstratified Cox models, \citet{huang2013oracle} and \citet{kong2014non} presented the oracle inequalities for the lasso estimator.  However,  with penalization,  lasso estimates are biased towards zero \citep{van2014asymptotically},  and they do not possess regular limiting distributions even under linear regression with a fixed number of covariates \citep{fu2000asymptotics}. Conditional inference based on the selected model is invalid, either, due to the failure to account for uncertainty in model selection.  Hence,  lasso cannot be directly applied to draw statistical inference.}
	There is  literature \lu{on inference for} unstratified Cox proportional hazards models
	under the related asymptotic framework.
	For example, \citet{fang2017testing} proposed decorrelated score tests for a low-dimensional component in the regression parameters, and \citet{kong2018high}, \citet{yu2018confidence} and \citet{xia2021cox} proposed to correct asymptotic biases of the lasso estimator following the framework of  \citet{van2014asymptotically}, \citet{zhang2014confidence} and \citet{javanmard2014confidence} that were originated from high-dimensional linear or generalized linear models. For Cox models, all of these methods, except \citet{xia2021cox} which considered the ``large $n$, diverging $p$'' scenario, assumed sparsity on the inverse information matrix. 
	This sparse matrix assumption, however, may not hold for models beyond linear regression, leading to insufficient bias correction and  under-covered  confidence intervals. Moreover, as  these methods were not designed for modeling stratified data, they are not directly applicable to the analysis of the SRTR data. \lu{To our knowledge, the current literature lacks inferential methods with  theoretical rigor for stratified Cox models with a diverging number of covariates.}

	We propose a 
	de-biased lasso approach  for Cox models 
	stratified by transplant centers,
	which  solves a series of quadratic programming problems to estimate the inverse information matrix, and corrects the biases from the lasso estimator for valid statistical inference. Our asymptotic results enable us to draw inference   on any linear combinations of model parameters,  including the low-dimensional targets in \citet{fang2017testing} and  \citet{kong2018high} as special cases and  fundamentally deviating  from the stepwise regression adopted by \citet{rao2009comprehensive}. 
	\lu{When  the number of covariates is relatively large compared to the sample size, 
	our approach yields less biased estimates and more properly covered confidence intervals than MSPLE as well as the methods of \citet{fang2017testing, kong2018high, yu2018confidence}  adapted to the stratified setting.} Therefore, it is well-suited  
	for analyzing the SRTR data, especially among the oldest recipient group that has the smallest sample size.

	Applications of our method to the SRTR data have generated   reliable estimation and inference results for the effects of an expanded list of donor and recipient factors. 
	We find that receiving kidneys from older donors is associated with an increased hazard of graft failure after adjusting for many confounding factors, and that the dependence on donors' age is non-linear.
	The results may inform more comprehensive assessments of  post-transplant prognosis and  kidney allocation.

	The article is organized as follows. We introduce the proposed de-biased lasso approach in Section \ref{sec:method} and establish the asymptotic results in Section \ref{sec:theory}, which form the basis of inference for the SRTR data.  We conduct simulations  in Section \ref{sec:simul} and demonstrate that our  method outperforms MSPLE in  bias correction and confidence interval coverage. In Section \ref{sec:app}, we analyze the SRTR data by using the proposed de-biased approach.
    Finally, we provide a few concluding remarks in Section \ref{sec:discuss}, the detailed list of covariates considered in the analysis of SRTR data in Appendix \ref{appA}, and regularity conditions in Appendix \ref{appB}. Technical details and proofs are deferred to the Supplementary Material.


	\section{De-biased lasso for stratified Cox models via quadratic programming}
	\label{sec:method}

	We apply stratified Cox models to evaluate the impacts of risk factors on post-transplant graft failure using the SRTR data.  For each  recipient age group defined in the first row of Table \ref{tab:character}, let $K$ be the total number of transplant centers, and $n_k$ be the number of recipients in the $k$-th transplant center, $k=1, \cdots, K$. With $i$ indexing recipients within the $k$-th transplant center, let $T_{ki}$ denote the graft failure free survival time, i.e. the time from transplantation to graft failure or death, whichever comes first [a common endpoint in transplantation \citep{kasiske2011relationship}],  $X_{ki}$ be  a $p$-dimensional covariate vector,
	and $C_{ki}$ be the censoring time. We assume random censoring, that is,  
	$T_{ki}$ and $C_{ki}$ are independent given $X_{ki}$. In the SRTR data, $p=94$ and  $X_{ki}$ includes risk factors from both donors and recipients, such as  gender, ABO blood type, history of diabetes and duration, angina/coronary artery disease, symptomatic peripheral vascular disease, drug treated systemic hypertension, drug treated COPD, and mismatch for each HLA locus between donors and recipients; see	a full list of covariates in Appendix \ref{appA}.
	 Let $\delta_{ki} = 1(T_{ki} \le C_{ki})$ be the event indicator and  $Y_{ki} = \min(T_{ki}, C_{ki})$ be the observed time. 
	With a center-specific baseline hazard function  $\lambda_{0k}(t)$, a stratified Cox model for $T_{ki}$ stipulates that its conditional hazard at $t$ given $X_{ki}$ is 
	\[
	\lambda_{ki}(t | X_{ki}) = \lambda_{0k}(t) \exp\{X_{ki}^T \beta^0\},
	\]
	where $\beta^0 = (\beta^0_1, \ldots, \beta^0_p)^T \in \mR^p$ is the vector of common regression coefficients across all centers. \lu{It is reasonable to assume that the true regression coefficients
$\beta^0$ are the same across strata \citep{kalbfleisch2002statistical}, while the center effects, though not of primary interest here, are accounted for via different baseline hazards $\lambda_{0k}(t)$'s.}
	
	\subsection{Estimation method}
	\label{subsec:dslasso}

	The MSPLE of  $\beta$  minimizes the following {\em negative} log stratified partial likelihood function
	\begin{equation} \label{eq:negloglik1}
	\ell(\beta) =  -\displaystyle \frac{1}{N} \sum_{k=1}^K \sum_{i=1}^{n_k} \left[  \beta^T X_{ki} - \log \left\{ \frac{1}{n_k}  \sum_{j=1}^{n_k} 1(Y_{kj} \ge Y_{ki}) \exp(\beta^T X_{kj}) \right\}   \right] \delta_{ki},
	\end{equation}
	where $N = \sum_{k=1}^K n_k$.
	In SRTR, the number of risk factors, though smaller than the sample size, is fairly large. In this case,  
	our numerical examination shows that  MSPLEs are biased and their confidence intervals do not yield nominal coverage. We consider a de-biased approach that has been shown to yield valid inference in linear regression \citep{van2014asymptotically,zhang2014confidence,javanmard2014confidence}.
Here we assume that $p<N$ but grows with $N$, which falls into the ``large $N$, diverging $p$'' framework. We extend  the de-biased lasso  to accommodate stratified Cox models.

	
	For a vector $x = (x_1, \ldots, x_p)^T \in \mR^p$, define $x^{\otimes 0} = 1$, $x^{\otimes 1} = x$ and $x^{\otimes 2} = x x^T$.
	Let $\dot{\ell}(\beta)$ and $\ddot{\ell}(\beta)$ be the first and the second order derivatives of $\ell(\beta)$ with respect to $\beta$, i.e.
	\begin{align*}
	\dell(\beta)  &=  - \displaystyle \frac{1}{N} \sum_{k=1}^K   \sum_{i=1}^{n_k} \left\{ X_{ki} - \displaystyle \frac{\hmu_{1k}(Y_{ki}; \beta) }{\hmu_{0k} (Y_{ki}; \beta) }  \right\} \delta_{ki}, \\
	\ddell(\beta) & = \displaystyle  \frac{1}{N} \sum_{k=1}^K \sum_{i=1}^{n_k} \left\{  \displaystyle \frac{\hmu_{2k}(Y_{ki}; \beta)}{\hmu_{0k}(Y_{ki}; \beta)} - \left[ \displaystyle \frac{\hmu_{1k}(Y_{ki}; \beta)}{\hmu_{0k}(Y_{ki}; \beta)} \right]^{\otimes 2} \right\} \delta_{ki},
	\end{align*}
	where $\hmu_{rk} (t; \beta) = {n_k}^{-1} \sum_{j=1}^{n_k} 1(Y_{kj} \ge t) X_{kj}^{\otimes r} \exp\{X_{kj}^T \beta\}, ~ r=0, 1, 2$. 
The lasso estimate, $\hbeta$, minimizes the penalized negative log stratified partial likelihood, 
	\begin{equation} \label{eq:lasso}
	\hbeta = {\argmin}_{\beta \in \mR^p} \{ \ell(\beta) + \lambda \| \beta \|_1 \},
	\end{equation}
	where $\lambda > 0$ is a tuning parameter that encourages sparse solutions. Here,  $\| x \|_q = (\sum_{j=1}^p |x_j|^q)^{1/q}$ is the $\ell_q$-norm for  $x \in \mR^{p}$, $q \ge 1$. 
	
	As $\hbeta$ is typically biased,  we can obtain the de-biased lasso estimator by a Taylor expansion of $\dot{\ell}(\beta^0)$ around $\widehat{\beta}$. To proceed,
	let  $\widehat{M}$ be a $p \times p$ matrix and $\widehat{M}_j$ its $j$th row. Pre-multiplying $\widehat{M}_j$ on both sides of the Taylor expansion and collecting terms, we have the following equality for the $j$th component of $\beta$:
	\begin{equation} \label{eq:derive_bhat_M}
	\widehat{\beta}_j  -  \beta^0_j  + \overbrace{ \left( - \widehat{M}_j \dot{\ell}(\widehat{\beta})   \right) }^{I_j}
	+ \overbrace{\left( - \widehat{M}_j {\Delta} \right) }^{II_j} 
	+ \overbrace{\left( \widehat{M}_j \ddot{\ell}(\widehat{\beta}) - {e}_j^T \right)  \left( \widehat{\beta} - \beta^0 \right)}^{III_j}   = - \widehat{M}_j \dot{\ell}(\beta^0),
	\end{equation}
	where the remainder $\Delta \in \mR^p$ in $II_j$ can be shown asymptotically negligible given the convergence rate of the lasso estimator $\widehat{\beta}$, and so is $III_j$ if $\widehat{M}_j \ddot{\ell}(\widehat{\beta}) - e_j^T$ converges to zero with certain rate that will be discussed later in Section \ref{sec:theory}. 
	Hence, the de-biased lasso estimator 
	corrects the bias of  $\hbeta_j$ with a one-step update of
	\begin{equation} \label{eq:dslasso0}
	\widehat{b}_j = \hbeta_j - \hTheta_j \dot{\ell}(\hbeta),
	\end{equation}
	which replaces $\widehat{M}_j$ in \eqref{eq:derive_bhat_M} with the $j$-th row of $\hTheta$, an estimate of the inverse information matrix $\Thetabeta$, and $ - \hTheta_j \dot{\ell}(\hbeta)$ is the bias correction term
	to $\hbeta_j$. Here,  $\Thetabeta$  is the inverse of the population version of $\hSigma$ given in the following  \eqref{eq:hatsigma}; see the explicit definition of $\Thetabeta$ underneath \eqref{popsig}.
		  Denote by
	$\widehat{b} = (\widehat{b}_1, \ldots, \widehat{b}_p)^T$ the vector of the de-biased lasso estimates, and, for compactness,  write (\ref{eq:dslasso0}) in a matrix form
	\begin{equation} \label{eq:dslasso}
	\widehat{b} = \hbeta - \hTheta \dot{\ell}(\hbeta).
	\end{equation}
	 Unlike $\hbeta$, the  de-biased estimator $\widehat{b}$ in (\ref{eq:dslasso}) is no longer  sparse. Motivated by \citet{javanmard2014confidence} on high-dimensional inference in linear regression, we propose to obtain $\hTheta$ by solving a series of quadratic programming problems. First, we compute
	\begin{equation} \label{eq:hatsigma}
	\hSigma = 
	\displaystyle \frac{1}{N}\sum_{k=1}^K \sum_{i=1}^{n_k} \delta_{ki} \left[  X_{ki} - \widehat{\eta}_k(Y_{ki}; \hbeta)  \right]^{\otimes 2},
	\end{equation}
	where $\widehat{\eta}_k(t; \beta) = \hmu_{1k}(t; \beta) / \hmu_{0k}(t; \beta)$ is the vector of weighted average covariates. 
	We use $\widehat{\Sigma}$, in lieu of $\ddot{\ell} (\hbeta)$,  for ease of proving  theoretical properties. 
	Indeed, as shown in the Supplementary Material,  $\| \hSigma - \ddot{\ell} (\hbeta) \|_{\infty} \stackrel{p}{\rightarrow}0 $ with a desirable  rate under the conditions in Section \ref{sec:theory}.  Next,  for each $j = 1, \ldots, p$, 
	we solve  a quadratic programming problem
	\begin{equation} \label{eq:qp}
	\min_{ m \in \mR^p} \left\{  m^T \hSigma m:  \| \hSigma m - e_j \|_{\infty} \le \gamma  \right\},
	\end{equation}
	where $\gamma > 0$ is a tuning parameter that is different from the lasso tuning parameter $\lambda$, $e_j$ is a unit directional vector with only the $j$th element being one, and $\| \cdot \|_{\infty}$ is the matrix max norm, i.e. $\| A \|_{\infty} = \max_{i,j} |A_{ij}|$ for a real matrix $A $. 
	Denote by $m^{(j)}$ the column vector of solution to (\ref{eq:qp}). We
	 obtain a $p \times p$ matrix $\hTheta = (m^{(1)}, \ldots, m^{(p)})^T$.  
	
	The constraint $\| \hSigma m - e_j \|_{\infty} \le \gamma$ in \eqref{eq:qp} controls deviations of  the de-biased estimates from the lasso estimates. In an extreme case of $\gamma = 1$, an admissible solution is $m=0$, and therefore there is no bias correction in the de-biased estimator;
	in another extreme case of $\gamma=0$, $m^{(j)}$ is the $j$th column of $\hSigma^{-1}$.  
	We  implement  \eqref{eq:qp}  by using  R  \texttt{solve.QP()}, which  can be programmed in parallel for large $p$. We name the method {\sl de-biased lasso via quadratic programming} (hereafter, DBL-QP).

	\subsection{Tuning parameter selection}
	\label{subsec:tuning}
	
	For the DBL-QP method, the lasso tuning parameter $\lambda$ can be selected via 5-fold cross-validation 
	as in \citet{simon2011regularization}.
	The selection of $\gamma$ is crucial
	as, for example,  Figure \ref{fig:tuning} reveals that $\gamma$ should be selected within a specific  range  (shaded in figures) to achieve the most desirable bias correction and confidence interval coverage probability. It also shows the large bias and poor coverage resulting from MSPLE. Inappropriate tuning can yield even more biased estimates with poorer coverage than  MSPLE. Results of lasso and oracle estimates are also provided as references, where oracle estimates are obtained from the reduced model that only contains truly nonzero coefficients.  
	
	\begin{figure}
		\centering
		\includegraphics[width=0.6\textwidth]{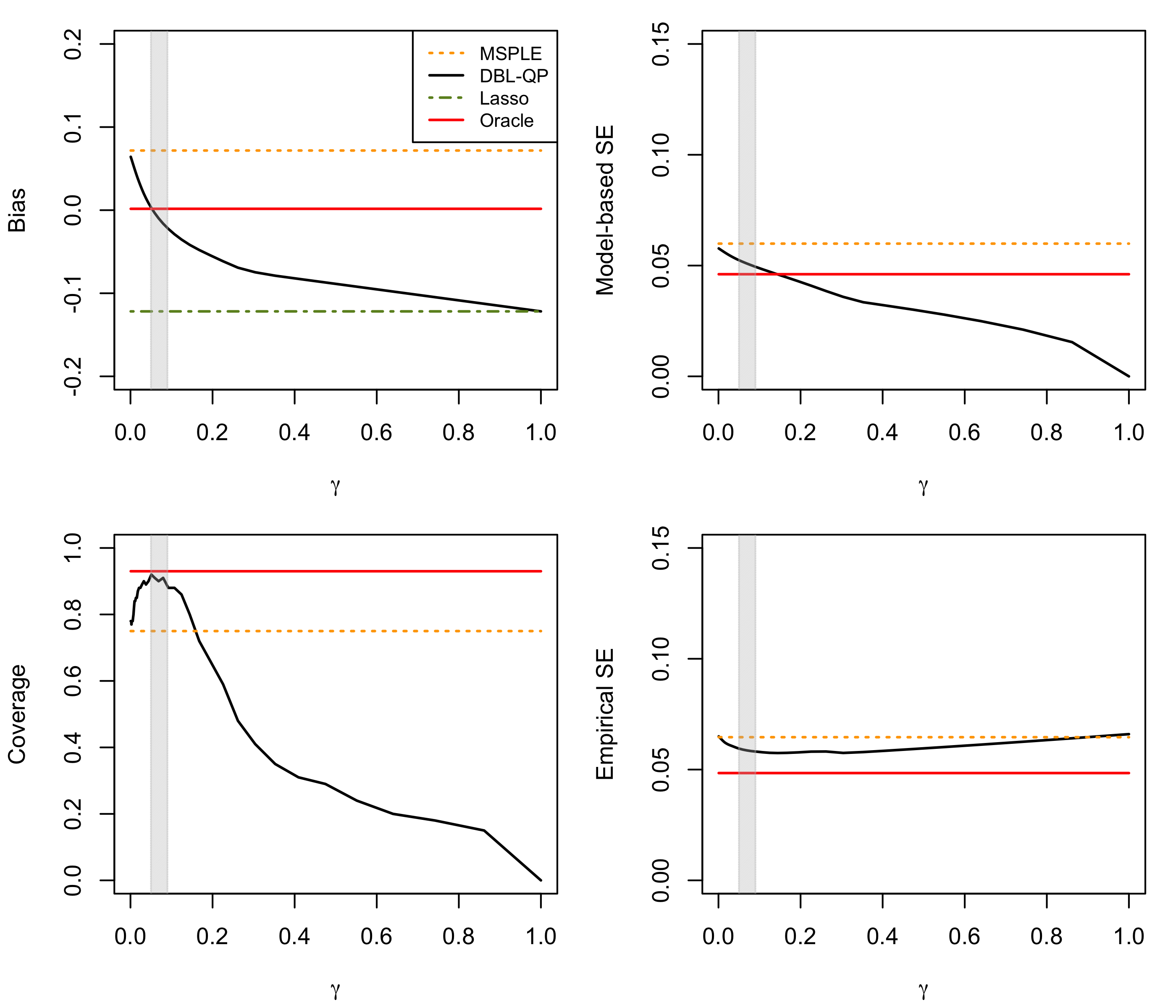}
		\caption{ The impact of choices of $\gamma$ on the averages of biases, empirical coverage probabilities, model-based and empirical standard errors, based on 100  simulations with $K=5$ strata and $n_k=200$  in each stratum,   $p=100$ covariates simulated from a multivariate normal distribution with mean zero and an AR(1) covariance matrix  ($\rho=0.5$) and truncated at $\pm 3$. 
		 Survival times are simulated with a hazard $\lambda_{0k} \exp\{X_{ki}^T \beta^0\}$, where $\lambda_{0k}$ are  constants generated from Uniform $(0.1, 0.5)$, and four nonzero coefficients in $\beta^0$ take  1, 1, 0.3 and 0.3, respectively.
		 Censoring times are independently simulated from  Uniform $(1, 30)$. 
	}
		\label{fig:tuning}
	\end{figure}


	Intuitively, $\gamma$ should be chosen near zero, resulting in a de-biased estimator with estimates of large coefficients close to oracle estimates. We do not recommend evaluating  cross-validation criteria by  plugging in the de-biased estimates because of  accumulative estimation errors. We opt for a  hard-thresholding approach that more effectively removes noise from the de-biased estimates: we retain the de-biased lasso estimate for $\beta^0_j$  only if the null hypothesis $\beta_j^0 = 0$ is rejected; otherwise, we set it  to zero  (shown in Algorithm  \ref{algo1}). The set $\widehat{A}$ in Step 2.2 of Algorithm  \ref{algo1} is expected to estimate well the set of truly associated variables. Specifically, we set $\widehat{A}$ to be the index set of variables whose Wald  statistic $ \sqrt{N} | \widehat{b}_j | /\widehat{\Theta}_{jj}^{1/2} > z_{\alpha/(2p)}$, where $z_{\alpha/(2p)}$ is the upper $\{ \alpha/(2p) \}$-th quantile of the standard normal distribution. The cutoff is determined by Theorem \ref{thm:main} and Bonferroni correction for multiple testing.
	When implementing cross-validation,  we can either take stratum as the sampling unit and randomly split  strata, or randomly split observations within each stratum, to form training and testing subsets. 
  We find  the former 
improves stability of tuning parameter selection when there are a number of small-sized strata. 

	\begin{algorithm}[ht] 	
		\caption{Selection of the tuning parameter $\gamma$ using cross-validation \label{algo1}}
		\begin{algorithmic}
			\item[\textbf{Step 1}]  Pre-determine a grid of points for $\gamma$ in [0,1], denoted as $\gamma^{(g)}, g=1, \cdots, G$, and set each $cv_g = 0$.  
			\item[\textbf{Step 2}] Randomly assign the $K$ strata into $M$ folds, 
			leaving one fold for testing and the others for training. 
			Set $q = 1$.  
			\item[\quad \textbf{Step 2.1}]  While $q \le M$, use the $q$th training set to compute the de-biased lasso estimator with $\gamma^{(g)}, g=1, \cdots, G$, denoted as $\widehat{b}^{(gq)}$, and define the active set $\widehat{A}^{(gq)}$. 
			\item[\quad \textbf{Step 2.2}] Define the thresholded de-biased lasso estimator $\widehat{b}^{(gq)}_{thres} = \widehat{b}^{(gq)} \cdot 1(j \in \widehat{A}^{(gq)})$, i.e. setting components of $\widehat{b}^{(gq)}$outside the active set $\widehat{A}^{(gq)}$ to 0.  
			\item[\quad \textbf{Step 2.3}]  Compute the negative log partial likelihood on the $q$th testing set $\ell^{(q)} (\widehat{b}^{(gq)}_{thres})$. 
			\item[\quad \textbf{Step 2.4}]  Set $cv_g \leftarrow cv_g + N^{(q)} \ell^{(q)} (\widehat{b}^{(gq)}_{thres})$, for $g = 1, \cdots, G$, where $N^{(q)}$ is the total number of observations in the $q$th testing set.  
			\item[\quad \textbf{Step 2.5}] Set $q \leftarrow q + 1$ and go to Step 2.1.
			\item[\textbf{Step 3}] Let $\hat{g} = \argmin_{g} cv_g$. The final output tuning parameter value is $\gamma^{(\hat{g})}$.
		\end{algorithmic}  
	\end{algorithm}

\section{Valid statistical inference based on the de-biased lasso estimator}
	\label{sec:theory}
	
	This section  presents asymptotic results, which lay the groundwork for using the de-biased lasso estimator described in Section \ref{sec:method} to infer on the risk factors of graft failure in the SRTR analysis. The pertaining large sample framework posits that the number of strata $K$ is  fixed,  the smallest stratum size $n_{min} = \min_{1 \le k \le K} n_k \rightarrow \infty$, and 
	$
	{n_k} / {N} \rightarrow r_k > 0
	$
	as $n_{min} \rightarrow \infty$, $k=1, \cdots, K$.  
 \lu{This framework  conforms to the real world setting of our concern, where the number of transplant centers nationwide is finite, and the number of patients or transplant events in each center increases over the years.}
	We provide regularity conditions \lu{and their discussion} in Appendix \ref{appB},  \lu{and present all} the proofs in the Supplementary Material.

	Let $\mu_{rk} (t; \beta) = \E [1(Y_{k1} \ge t) X_{k1}^{\otimes r} \exp\{X_{k1}^T \beta\}]$  be the limit of $\hmu_{rk} (t; \beta)$, $r = 0, 1, 2, ~ k=1, \cdots, K$. Then the limit of the weighted covariate process for $\heta_k(t; \beta) = \hmu_{1k} (t; \beta) / \hmu_{0k} (t; \beta)$ becomes $\eta_{k0}(t; \beta) = \mu_{1k} (t; \beta) / \mu_{0k} (t; \beta)$. Let
	\[
	\Sigma_{\beta^0, k} = \E [\{ X_{ki} - \eta_{k0}(Y_{ki}; \beta^0) \}^{\otimes 2} \delta_{ki}]
	\] 
	be the information matrix for the $k$-th stratum, $k=1, \cdots, K$. The overall information matrix across all strata then becomes the weighted average of the stratum-specific information matrices,
	\begin{equation} \label{popsig}
	\Sigma_{\beta^0} = \sum_{k=1}^K r_k \Sigma_{\beta^0, k}.
	\end{equation}
	The inverse information matrix is $\Theta_{\beta^0} = \Sigma_{\beta^0}^{-1}$, which is to be approximated by  $\hTheta$ obtained in Section \ref{subsec:dslasso}. 
	
	The following theorem  establishes the asymptotic normality of any linear combination of the estimated  regression parameters, {$c^T \widehat{b}$} for some loading vector $c \in \mR^p$, obtained by the proposed DBL-QP method. For an $m \times r$ matrix $A = (a_{ij})$,  define the $\ell_1$-induced matrix norm $\| A \|_{1,1} = \max_{1 \le j \le r} \sum_{i=1}^m |a_{ij}|$. For two positive sequences $\{a_n\}$ and $\{b_n\}$, we write $a_n \asymp b_n$ if there exist two constants $C$ and $C^{\prime}$ such that $0 < C \le a_n / b_n \le C^{\prime} < \infty$. Let $s_0$ be the number of nonzero elements of $\beta^0$.

	\begin{theorem} \label{thm:main}
		Assume that the tuning parameters $\lambda$ and $\gamma$ satisfy $\lambda \asymp \sqrt{\log(p)/n_{min}}$ and \\ $\gamma \asymp \| \Theta_{\beta^0} \|_{1,1} \{ \max_{1\le k \le K} | n_k/N - r_k| + s_0 \lambda \} $, and that $\| \Theta_{\beta^0} \|_{1,1}^2  \{ \max_{1 \le k \le K}|n_k/N - r_k| + s_0 \lambda \} p\sqrt{\log(p)} \rightarrow 0$ as $n_{min} \rightarrow \infty$.
		  Under  Assumptions \ref{assump1}--\ref{assump5} given in  Appendix \ref{appB}, for any $c \in \mathbb{R}^p$ such that $\| c \|_2 = 1$, $\| c \|_1 \le a_*$ with  $a_* < \infty$ being an absolute positive constant, and $\{ c^T \Thetabeta c \}^{-1} = \mathcal{O}(1)$, we have
		\[
		\displaystyle \frac{\sqrt{N} c^T (\widehat {b} - \beta^0)}{(c^T \widehat{\Theta} c)^{1/2} }
		\overset{\mathcal{D}}{\rightarrow} \mathcal{N}(0,1).
		\]
		
	\end{theorem}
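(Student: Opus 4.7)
The plan is to start from the identity (\ref{eq:derive_bhat_M}) specialized to $\widehat{M}=\widehat{\Theta}$, which after rearrangement yields
\[
\sqrt{N}\, c^T(\widehat{b}-\beta^0) \;=\; -\sqrt{N}\, c^T \widehat{\Theta}\, \dot{\ell}(\beta^0) \;-\; \sqrt{N}\, c^T (II) \;-\; \sqrt{N}\, c^T (III),
\]
where $II$ collects the second-order Taylor remainders and $III_j = (\widehat{\Theta}_j \ddot{\ell}(\widehat{\beta}) - e_j^T)(\widehat{\beta}-\beta^0)$. I will show that $-\sqrt{N}\,c^T \widehat{\Theta}\,\dot{\ell}(\beta^0)$ is asymptotically $\mathcal{N}(0, c^T \Theta_{\beta^0} c)$, that $c^T \widehat{\Theta} c \stackrel{p}{\to} c^T \Theta_{\beta^0} c$, and that both remainder pieces are $o_P(1)$, from which Slutsky's theorem delivers the stated result.

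For the leading stochastic term, I would write $\dot{\ell}(\beta^0)$ in its stratum-wise martingale-integral form under the stratified Cox model: for each $k$, $\dot{\ell}_k(\beta^0)$ can be expressed as an integral against the counting process martingales $dM_{ki}(t) = dN_{ki}(t) - 1(Y_{ki}\ge t)\exp(X_{ki}^T\beta^0)\lambda_{0k}(t)\,dt$, with the integrand $X_{ki} - \eta_{k0}(t;\beta^0)$ up to an $o_P(1/\sqrt{N})$ term coming from replacing $\widehat{\eta}_k$ by $\eta_{k0}$ (controlled through Assumptions \ref{assump1}--\ref{assump5}). Writing $\sqrt{N}\,c^T \widehat{\Theta}\, \dot{\ell}(\beta^0)$ as a sum of independent martingale integrals across strata and subjects, I would apply the Lindeberg-type martingale CLT of Rebolledo, verifying the predictable variation converges to $c^T \Theta_{\beta^0} c$ (using $\widehat{\Theta}\to\Theta_{\beta^0}$ in an appropriate operator norm) and a Lindeberg condition via boundedness of covariates.

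Control of $\widehat{\Theta}$ is the technical core. The QP constraint $\|\widehat{\Sigma}\widehat{\Theta}_j^T - e_j\|_\infty \le \gamma$, combined with the identifiability rate $\|\widehat{\Sigma}-\Sigma_{\beta^0}\|_\infty = O_P(\max_k|n_k/N-r_k| + \sqrt{\log p/n_{\min}})$ obtained from empirical-process bounds on the three averages $\widehat{\mu}_{rk}$, allows me to bound $\|\widehat{\Theta}_j - \Theta_{\beta^0,j}\|_1$ through the usual argument: $\Theta_{\beta^0,j}$ is feasible up to an $O_P(\gamma)$ slack, so by optimality $\widehat{\Theta}_j^T \widehat{\Sigma}\widehat{\Theta}_j \le \Theta_{\beta^0,j}\widehat{\Sigma}\Theta_{\beta^0,j}^T$, leading via the basic inequality to $\|\widehat{\Theta}-\Theta_{\beta^0}\|_{1,1} = O_P(\|\Theta_{\beta^0}\|_{1,1}^2(\max_k|n_k/N-r_k|+s_0\lambda))$ under the chosen $\gamma$. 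Together with the $\ell_1$-bound $\|\widehat{\beta}-\beta^0\|_1 = O_P(s_0\sqrt{\log p/n_{\min}})$ for the stratified lasso (proved by combining the usual cone/compatibility argument with concentration of $\dot{\ell}(\beta^0)$) and the auxiliary fact $\|\widehat{\Sigma}-\ddot{\ell}(\widehat{\beta})\|_\infty = o_P(1)$ alluded to in the paper, I bound
\[
|c^T(III)| \;\le\; \|c\|_1 \cdot \|\widehat{\Theta}\ddot{\ell}(\widehat{\beta}) - I\|_\infty \cdot \|\widehat{\beta}-\beta^0\|_1 \;=\; o_P(N^{-1/2})
\]
under the rate assumption $\|\Theta_{\beta^0}\|_{1,1}^2(\max_k|n_k/N-r_k|+s_0\lambda)\,p\sqrt{\log p}\to 0$. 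The Taylor remainder $II_j$ is handled similarly by expanding $\ddot{\ell}$ to third order, bounding the third derivative uniformly over a neighborhood via sub-Gaussian tails of $X_{ki}$, and absorbing the resulting quadratic in $\|\widehat{\beta}-\beta^0\|_1$.

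The main obstacle will be making the stratified QP analysis genuinely rate-sharp: unlike the i.i.d.\ linear-model setting, here $\widehat{\Sigma}$ is a weighted sum over strata whose weights $n_k/N$ differ from the population weights $r_k$, and both discrepancies $\max_k|n_k/N - r_k|$ and the lasso-induced slack $s_0\lambda$ feed into $\gamma$. The delicate step is to show simultaneously that $\Theta_{\beta^0}$ remains QP-feasible with high probability and that the induced $\|\widehat{\Theta}-\Theta_{\beta^0}\|_{1,1}$ bound is small enough that the $p\sqrt{\log p}$ factor in the remainder control is absorbed, which is precisely the role played by the rate assumption in the theorem's statement. Once that bound is in hand, combining it with the martingale CLT for the leading term and $c^T\widehat{\Theta}c \stackrel{p}{\to} c^T\Theta_{\beta^0}c$ (a consequence of $\|\widehat{\Theta}-\Theta_{\beta^0}\|_{1,1}\to 0$ together with $\|c\|_1 \le a_*$) yields the stated asymptotic standard-normal limit.
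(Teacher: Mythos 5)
Your overall architecture matches the paper's: decompose $c^T(\widehat b-\beta^0)$ via the de-biasing identity, prove a martingale CLT for the leading score term, and kill the remainders using the lasso $\ell_1$-rate, the QP feasibility of $\Theta_{\beta^0}$, and concentration of $\widehat\Sigma$. However, there are three places where your sketch either has a genuine gap or claims more than the available tools deliver.

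First, you propose to apply Rebolledo's martingale CLT directly to $-\sqrt{N}\,c^T\widehat\Theta\,\dot\ell(\beta^0)$, treating the convergence $\widehat\Theta\to\Theta_{\beta^0}$ as something to be absorbed into the predictable variation. This step is not licit as written: $\widehat\Theta$ is computed from the full sample, so $c^T\widehat\Theta\{X_{ki}-\widehat\eta_k(t;\beta^0)\}$ is not a predictable integrand. The paper avoids this by first splitting off $(i)=-c^T(\widehat\Theta-\Theta_{\beta^0})\dot\ell(\beta^0)$ and bounding $|\sqrt N (i)|\le \sqrt N\,\|c\|_1\,\|\widehat\Theta-\Theta_{\beta^0}\|_{\infty,\infty}\,\|\dot\ell(\beta^0)\|_\infty$, then applying the martingale CLT only to the term with the deterministic $\Theta_{\beta^0}$ (its Lemma S.2). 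You have all the ingredients to do this, but the split must be made explicit — and it is exactly this term that consumes the factor $p\sqrt{\log p}$ in the theorem's rate condition.

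Second, and relatedly, your claimed bound $\|\widehat\Theta-\Theta_{\beta^0}\|_{1,1}=\OP\bigl(\|\Theta_{\beta^0}\|_{1,1}^2\{\max_k|n_k/N-r_k|+s_0\lambda\}\bigr)$ from the QP basic inequality is stronger than what is actually obtainable without sparsity assumptions on the rows of $\Theta_{\beta^0}$ — assumptions the paper deliberately avoids. The feasibility-plus-optimality argument yields an \emph{entrywise} bound $\|\widehat\Theta-\Theta_{\beta^0}\|_{\infty}=\OP(\gamma\|\Theta_{\beta^0}\|_{1,1})$ (the paper's Lemma S.5, requiring additionally $\limsup p\gamma\le 1-\epsilon'$), and converting to the $\ell_1$-induced norm costs a factor of $p$. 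If your $\|\cdot\|_{1,1}$ bound were true at that rate, the $p$ in the theorem's rate condition would be unnecessary; its presence is the tell that only the entrywise bound is available.

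Third, your treatment of the Taylor remainder $II_j$ via a third-order expansion and "sub-Gaussian tails of $X_{ki}$" is off-target: the covariates are assumed bounded (Assumption B.1), not merely sub-Gaussian, and the paper's proof never touches third derivatives. It instead uses the exact mean-value identity $\dot\ell(\widehat\beta)-\dot\ell(\beta^0)=D(\widehat\beta-\beta^0)$ with a row-wise intermediate-point matrix $D$, so the entire remainder reduces to $c^T\widehat\Theta(\widehat\Sigma-D)(\widehat\beta-\beta^0)$, controlled by $\|\widehat\Sigma-D\|_\infty=\OP(s_0\lambda+\max_k|n_k/N-r_k|)$ via the chain $\widehat\Sigma\to\Sigma_{\beta^0}\to\ddot\ell(\beta^0)\to D$. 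A third-derivative route can be made to work under boundedness, but as sketched it is both unnecessary and incompletely justified (one needs uniform control of ratios $\widehat\mu_{rk}/\widehat\mu_{0k}$ over $[0,\tau]$, which requires Assumption B.3, not tail conditions on $X$).
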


\lu{Note that, instead of listing it as a regularity condition in Appendix \ref{appB}, we assume $\{ c^T \Thetabeta c \}^{-1} = \mathcal{O}(1)$ in the above theorem because the vector $c$ is also defined here. A similar condition is assumed in \cite{van2014asymptotically} [Theorem 3.3 (vi)] which is weaker than uniformly bounding the maximum eigenvalue of $\Sigma_{\beta^0}$.	}  The hypothesis testing with $H_{0}: c^T \beta^0 - a_0=0$ versus $H_{1}: c^T \beta^0- a_0 \ne 0$ for some constants $c \in \mR^p$ and $a_0$
	  entails various applications. 
	 For example, by setting $a_0=0$ and $c$ to be a basis vector with only one element being 1 and all the others 0, we can draw
	 inference on any  covariate in the presence of all the other covariates. 
	 In particular, we will draw inference on the pairwise differences in graft failure risk 
	 among donor age groups, e.g. between $(10, 20]$ and $(20,30]$ (the reference level) years old, and among patients with different primary kidney diagnoses (diabetes is the reference level); see Section \ref{sec:app}.
	Given an appropriately chosen $c$ and with $T = \sqrt{N} (c^T \widehat{b} - a_0) / (c^T \widehat{\Theta} c)^{1/2}$,  we  construct a two-sided test function 
	\[
	\phi(T) = \left\{  \begin{array}{ll}
	1 & \quad \mathrm{if} \ |T| > z_{\alpha/2} \\
	0 & \quad \mathrm{if} \ |T| \le z_{\alpha/2}
	\end{array} ,  \right.
	\]
	where $z_{\alpha/2}$ is the upper $(\alpha/2)$-th quantile of the standard normal distribution. Corollary \ref{coro:power} provides the asymptotic type I error and power of the test $\phi(T)$, and  Corollary \ref{coro:ci} formalizes the construction of  level $\alpha$ confidence intervals for $c^T \beta^0$ which ensures  the nominal coverage probability asymptotically.

	\begin{corollary} \label{coro:power}
		Under the conditions specified  in Theorem \ref{thm:main}, 
		$\PP (\phi(T)=1 | H_0) \rightarrow \alpha$ as $n_{min} \rightarrow \infty$. Moreover, under  $H_1: a_0 - c^T \beta^0 \ne 0$,  
		$\PP (\phi(T) = 1 | H_1) \rightarrow 1$. 
	\end{corollary}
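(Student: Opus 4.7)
The plan is to reduce both parts of Corollary \ref{coro:power} to Theorem \ref{thm:main} by decomposing the test statistic into a stochastic piece that the theorem already handles and a deterministic shift that drives the power, with $(c^T \hTheta c)^{1/2}$ acting as a common stochastic normalizer.

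For the Type~I error, note that under $H_0$ one has $c^T \beta^0 = a_0$, so
\[
T \;=\; \frac{\sqrt{N}\, c^T(\widehat{b} - \beta^0)}{(c^T \hTheta c)^{1/2}},
\]
which is precisely the quantity that Theorem \ref{thm:main} shows converges in distribution to $\mathcal{N}(0,1)$. Since the boundary of $\{|x| > z_{\alpha/2}\}$ has zero $\mathcal{N}(0,1)$-measure, the Portmanteau theorem yields $\PP(\phi(T) = 1 \mid H_0) \to \PP(|Z| > z_{\alpha/2}) = \alpha$ with $Z \sim \mathcal{N}(0,1)$.

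For the power, I would split $T = T_1 + T_2$ with $T_1 = \sqrt{N}\, c^T(\widehat{b} - \beta^0)/(c^T \hTheta c)^{1/2}$ and $T_2 = \sqrt{N}(c^T \beta^0 - a_0)/(c^T \hTheta c)^{1/2}$. Theorem \ref{thm:main} gives $T_1 = \OP(1)$, so the triangle inequality $|T| \ge |T_2| - |T_1|$ reduces the power claim to showing $|T_2| \to \infty$ in probability. Under $H_1$, $|c^T \beta^0 - a_0|$ is a fixed positive constant, so the numerator of $T_2$ diverges at rate $\sqrt{N}$, and the only remaining task is to show that the denominator is stochastically bounded, i.e., $c^T \hTheta c = \OP(1)$.

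This denominator control is the main obstacle, but I expect it to follow from tools already developed in proving Theorem \ref{thm:main}. Specifically, the quadratic programming construction of $\hTheta$ together with the consistency of $\hSigma$ yields an $\ell_\infty$ bound on $\hTheta - \Thetabeta$; combined with $\|c\|_1 \le a_*$ and H\"older's inequality,
\[
|c^T (\hTheta - \Thetabeta) c| \;\le\; \|c\|_1^2\, \|\hTheta - \Thetabeta\|_\infty \;=\; \oP(1),
\]
so $c^T \hTheta c = c^T \Thetabeta c + \oP(1)$. The regularity on $\bSigma_{\beta^0}$ in Appendix \ref{appB} ensures $c^T \Thetabeta c$ is bounded, while the hypothesis $\{c^T \Thetabeta c\}^{-1} = \mathcal{O}(1)$ keeps it bounded away from zero. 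Hence $(c^T \hTheta c)^{1/2}$ is $\OP(1)$ and bounded away from zero with probability tending to one, so $|T_2| \to \infty$ in probability and $\PP(\phi(T) = 1 \mid H_1) \to 1$.
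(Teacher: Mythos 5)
Your argument is correct and is precisely the ``direct consequence of Theorem \ref{thm:main}'' that the paper has in mind when it omits the proofs of the corollaries as straightforward. The one step you flag as the main obstacle, namely $c^T \hTheta c = c^T \Thetabeta c + \oP(1)$ with $c^T \Thetabeta c$ bounded above (via Assumption \ref{assump5}) and below (via $\{c^T\Thetabeta c\}^{-1}=\mathcal{O}(1)$), is in fact already established verbatim in the last display of the paper's proof of Theorem \ref{thm:main}, so nothing further is needed.
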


	\begin{corollary} \label{coro:ci}
		Suppose that the conditions in Theorem \ref{thm:main} hold. 
		Construct the random confidence interval $\mathcal{R}(\alpha) = \left[ c^T \widehat{b} - z_{\alpha/2} (c^T \widehat{\Theta} c / N)^{1/2},~ c^T \widehat{b} + z_{\alpha/2} (c^T \widehat{\Theta} c / N)^{1/2}\right]$. Then $\PP (c^T \beta^0 \in \mathcal{R}(\alpha)) \rightarrow 1 - \alpha$ as $n_{min} \rightarrow \infty$, where the probability is taken 
		under the true $\beta^0$.
	\end{corollary}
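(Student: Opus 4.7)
The plan is to convert the coverage event into a probability statement about the studentized pivot whose limiting distribution is supplied by Theorem \ref{thm:main}, and then to read off $1-\alpha$ from the standard normal tail.

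First I would rewrite $\{c^T \beta^0 \in \mathcal{R}(\alpha)\}$ in equivalent standardized form. By the definition of $\mathcal{R}(\alpha)$ this event is
\[
\left\{ \left| c^T \widehat{b} - c^T \beta^0 \right| \le z_{\alpha/2}\,(c^T \widehat{\Theta} c / N)^{1/2} \right\},
\]
which, on the event $\{c^T \widehat{\Theta} c > 0\}$, coincides with
\[
\left\{ \left| \frac{\sqrt{N}\, c^T (\widehat{b} - \beta^0)}{(c^T \widehat{\Theta} c)^{1/2}} \right| \le z_{\alpha/2} \right\}
\]
after multiplying both sides by $\sqrt{N}/(c^T \widehat{\Theta} c)^{1/2}$.

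Next, Theorem \ref{thm:main} supplies the weak convergence of the statistic inside the absolute value to $\mathcal{N}(0,1)$. Applying the continuous mapping theorem to $x \mapsto |x|$ and then the Portmanteau theorem, using that the standard normal places no mass at $\pm z_{\alpha/2}$, I obtain
\[
P\{c^T \beta^0 \in \mathcal{R}(\alpha)\} \longrightarrow P(|Z| \le z_{\alpha/2}) = 1 - \alpha
\]
with $Z \sim \mathcal{N}(0,1)$, which is the claim.

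The only technical subtlety is to justify that $c^T \widehat{\Theta} c > 0$ with probability tending to one so that the rescaling above is well defined. Under the hypothesis $(c^T \Theta_{\beta^0} c)^{-1} = \mathcal{O}(1)$ carried over from Theorem \ref{thm:main}, together with the convergence of $c^T \widehat{\Theta} c$ to $c^T \Theta_{\beta^0} c$ (already established en route to Theorem \ref{thm:main} in order to identify the asymptotic variance in the studentized pivot), this holds automatically. Hence there is effectively no main obstacle beyond Theorem \ref{thm:main} itself; the corollary is a routine Slutsky-style translation of the asymptotic normality into a confidence statement.
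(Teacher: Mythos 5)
Your proof is correct and is exactly the routine argument the authors had in mind: the paper explicitly omits proofs of the corollaries as "direct results of the corresponding theorems," and your conversion of the coverage event into the studentized pivot, followed by Theorem \ref{thm:main}, continuous mapping, and the Portmanteau theorem, is that direct argument. Your side remark on $c^T\widehat{\Theta}c>0$ is also consistent with what the paper establishes en route to Theorem \ref{thm:main} (namely $|c^T(\widehat{\Theta}-\Thetabeta)c|=\oP(1)$ together with $\{c^T\Thetabeta c\}^{-1}=\mathcal{O}(1)$).
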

	
	Our asymptotic results facilitate simultaneous inference on multiple contrasts in the context of post-transplant renal graft failure. 
	For example, an important question to address is whether donor age is associated with graft failure. 
	With  categorized  donor age in our data analysis, simultaneous comparisons among the seven categories, e.g. $\le 10, (10, 20],
	(20,30], (30,40], (40,50], (50, 60]$ and $60+$, naturally form multiple null contrasts. 
	 These contrasts can be formulated by $J \beta^0$, where $J$ is an $m\times p$ matrix, and $m$ represents the number of linear combinations or contrasts.
	 The following theorem and corollary summarize the results for inference on multiple contrasts, $J \beta^0$.  See an application of the asymptotic results to the SRTR data with $(m,p)=(6,94)$ in Section \ref{sec:app}. 
	
	\begin{theorem} \label{thm:simul}
		Suppose that $J$ is an  $m\times p$ matrix  with $rank(J) = m$, $\| J \|_{\infty, \infty} = \mathcal{O}(1)$ and $J \Theta_{\beta^0} J^T \rightarrow F$, where $F$ is a nonrandom $m \times m$ positive definite matrix. Assume that the tuning parameters $\lambda$ and $\gamma$ satisfy $\lambda \asymp \sqrt{\log(p)/n_{min}}$ and $\gamma \asymp \| \Theta_{\beta^0} \|_{1,1} \{ \max_{1\le k \le K} | n_k/N - r_k| + s_0 \lambda \} $, and that $\| \Theta_{\beta^0} \|_{1,1}^2  \{ \max_{1 \le k \le K}|n_k/N - r_k| + s_0 \lambda \} p\sqrt{\log(p)} \rightarrow 0$ as $n_{min} \rightarrow \infty$. Under  Assumptions \ref{assump1}--\ref{assump3}, \ref{assump5} and \ref{assump6} given in Appendix \ref{appB}, we have
		\[
		\sqrt{N} J (\widehat{b} - \beta^0) \overset{\mathcal{D}}{\rightarrow} \mathcal{N}_m (0,F).
		\]
	\end{theorem}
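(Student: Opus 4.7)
My plan is to apply the Cram\'er--Wold device, reducing the multivariate claim to the univariate statement that $\sqrt{N}\, u^T J(\widehat{b}-\beta^0) \overset{\mathcal{D}}{\to} \mathcal{N}(0, u^T F u)$ for every fixed nonzero $u\in\mathbb{R}^m$. I would obtain this univariate statement by invoking Theorem~\ref{thm:main} along the sequence of loading vectors $c_n := J^T u/\|J^T u\|_2$, which is well-defined for all large $n$ because $\mathrm{rank}(J)=m$ forces $\|J^T u\|_2 > 0$, and then rescaling by $\|J^T u\|_2$ to recover $u^T J(\widehat{b}-\beta^0)$.

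The main work is to verify that $c_n$ satisfies the three hypotheses of Theorem~\ref{thm:main}; once that is done, applying the theorem and rescaling will give the univariate limit directly. Unit $\ell_2$-norm is automatic. Reading $\|J\|_{\infty,\infty}$ as the $\ell_\infty$-to-$\ell_\infty$ operator norm (consistent with how $\|\cdot\|_{1,1}$ is used elsewhere in the paper), the assumption $\|J\|_{\infty,\infty}=\mathcal{O}(1)$ yields $\|J^T u\|_1 \le \|J\|_{\infty,\infty}\|u\|_1 = \mathcal{O}(1)$; a positive lower bound on $\|J^T u\|_2^2$ follows from $\|J^T u\|_2^2 \ge (u^T J\Thetabeta J^T u)/\lambda_{\max}(\Thetabeta)$ together with $u^T J\Thetabeta J^T u \to u^T F u > 0$ and the standard upper bound on $\lambda_{\max}(\Thetabeta)$ supplied by the regularity conditions, so $\|c_n\|_1 = \|J^T u\|_1/\|J^T u\|_2 \le a_*$ uniformly in $n$. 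The third condition holds because
\[
(c_n^T\Thetabeta c_n)^{-1} \;=\; \frac{\|J^T u\|_2^2}{u^T J\Thetabeta J^T u} \;\le\; \frac{\|J^T u\|_1^2}{u^T J\Thetabeta J^T u} \;=\;\mathcal{O}(1),
\]
the denominator tending to $u^T F u > 0$. Theorem~\ref{thm:main} then yields $\sqrt{N}\,c_n^T(\widehat{b}-\beta^0)/(c_n^T\hTheta c_n)^{1/2}\overset{\mathcal{D}}{\to}\mathcal{N}(0,1)$, and multiplying numerator and denominator by $\|J^T u\|_2$ recasts this as $\sqrt{N}\,u^T J(\widehat{b}-\beta^0)/(u^T J\hTheta J^T u)^{1/2}\overset{\mathcal{D}}{\to}\mathcal{N}(0,1)$. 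Combining this with the consistency $u^T J\hTheta J^T u\overset{p}{\to} u^T F u$---which splits into $u^T J\Thetabeta J^T u\to u^T F u$ (by assumption) and $|u^T J(\hTheta-\Thetabeta)J^T u|\le \|J^T u\|_1^2\,\|\hTheta-\Thetabeta\|_\infty = o_p(1)$ using the entrywise rate on $\hTheta$ already secured for Theorem~\ref{thm:main}---Slutsky's theorem produces the one-dimensional limit $\mathcal{N}(0, u^T F u)$. The Cram\'er--Wold device then assembles these marginal limits into $\sqrt{N}\,J(\widehat{b}-\beta^0)\overset{\mathcal{D}}{\to}\mathcal{N}_m(0,F)$.

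The principal technical obstacle will be bookkeeping rather than any new conceptual ingredient: one must confirm that the remainder bounds supporting Theorem~\ref{thm:main}, in particular the controls on the $II_j$ and $III_j$ terms in the decomposition \eqref{eq:derive_bhat_M} and on $\|\hTheta-\Thetabeta\|_\infty$, depend on the loading direction only through $\|c\|_1$, so that the uniform $\mathcal{O}(1)$ control of $\|c_n\|_1$ established above transfers without loss across the family $\{c_n : u \in \mathbb{R}^m\}$. Since those bounds are already in hand under the same tuning-parameter scaling for $\lambda$ and $\gamma$, no new estimates are needed, and the multivariate limit should follow essentially for free from the univariate one.
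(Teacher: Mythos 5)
Your proposal is correct and follows essentially the same route as the paper's own (very brief) proof: a Cram\'er--Wold reduction to Theorem~\ref{thm:main} with loading vector proportional to $J^T u$, using $\|J^T u\|_1 \le \|J\|_{\infty,\infty}\|u\|_1 = \mathcal{O}(1)$ to verify the hypotheses. Your write-up merely adds the explicit $\ell_2$-normalization, the lower bound on $\|J^Tu\|_2$ via $\lambda_{\max}(\Thetabeta)\le 1/\epsilon_0$, and the Slutsky step for $u^TJ\hTheta J^Tu \overset{p}{\to} u^TFu$, all of which the paper leaves implicit.
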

	Here, $\| A \|_{\infty,\infty} = \max_{1 \le i \le m} \sum_{j=1}^r |a_{ij}|$ is the $\ell_{\infty}$-induced matrix norm for an $m \times r$ matrix $A = (a_{ij})$. 
	The theorem implies the following corollary, which  constructs test statistics and multi-dimensional confidence regions with  proper asymptotic type I error rates and nominal coverage probabilities. 
	
	\begin{corollary} \label{coro:thm2}
		Suppose the conditions in Theorem \ref{thm:simul} hold. For an $m \times p$ matrix $J$ as specified  in Theorem \ref{thm:simul}, and under  $H_0: J \beta^0 = a^0  \in \mR^m$,   
		\[T^{\prime} = N (J \widehat{b} - a^0)^T \widehat{F}^{-1} (J \widehat{b} - a^0)  \overset{\mathcal{D}}{\rightarrow} \chi^2_m,
		\]
		where $\widehat{F} = J \hTheta J^T$. Moreover, for an $\alpha \in (0,1)$, 
		define the random set $\mathcal{R}^{\prime}(\alpha) = \{ a \in \mR^m: N (J \widehat{b} - a)^T \widehat{F}^{-1} (J \widehat{b} - a) < \chi^2_{m, \alpha} \}$, where $\chi^2_{m, \alpha}$ is the upper $\alpha$-th quantile of $\chi^2_m$. Then $\PP (J\beta^0 \in \mathcal{R}^{\prime}(\alpha)) \rightarrow 1 - \alpha$ as $n_{min} \rightarrow \infty$, where the probability is taken under the true $\beta^0$.
	\end{corollary}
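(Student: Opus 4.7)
The plan is to deduce Corollary \ref{coro:thm2} directly from Theorem \ref{thm:simul} via a Slutsky/continuous-mapping argument, with the only substantive additional ingredient being consistency of the plug-in variance estimator $\widehat{F}=J\widehat{\Theta}J^T$ for $F$. First, under $H_0: J\beta^0 = a^0$, I rewrite the test statistic as
\[
T' \;=\; \bigl[\sqrt{N}\,J(\widehat{b}-\beta^0)\bigr]^T \widehat{F}^{-1} \bigl[\sqrt{N}\,J(\widehat{b}-\beta^0)\bigr],
\]
so that the asymptotic behavior of $T'$ is governed by (i) the joint convergence in Theorem \ref{thm:simul} and (ii) convergence of $\widehat{F}^{-1}$.

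Step one is to establish $\widehat{F}\overset{p}{\to}F$. Writing $\widehat{F}-F = J(\widehat{\Theta}-\Theta_{\beta^0})J^T + (J\Theta_{\beta^0}J^T - F)$, the second term vanishes by assumption, while for the first I would use $\|J(\widehat{\Theta}-\Theta_{\beta^0})J^T\|_\infty \le \|J\|_{\infty,\infty}^2 \|\widehat{\Theta}-\Theta_{\beta^0}\|_\infty$, so that $\|J\|_{\infty,\infty}=\mathcal{O}(1)$ and an existing $\ell_\infty$-rate on $\widehat{\Theta}-\Theta_{\beta^0}$ (which must already be established in the Supplementary Material to underpin Theorem \ref{thm:main} and the remainder term $III_j$ in \eqref{eq:derive_bhat_M}) give the conclusion. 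Since $F$ is positive definite, the continuous mapping theorem then yields $\widehat{F}^{-1}\overset{p}{\to}F^{-1}$.

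Step two combines this with Theorem \ref{thm:simul}: letting $Z_N=\sqrt{N}J(\widehat{b}-\beta^0)\overset{\mathcal{D}}{\to}Z\sim\mathcal{N}_m(0,F)$, Slutsky's lemma gives $T' = Z_N^T \widehat{F}^{-1} Z_N \overset{\mathcal{D}}{\to} Z^T F^{-1} Z$. Because $F^{-1/2}Z\sim\mathcal{N}_m(0,I_m)$ and $Z^T F^{-1} Z = \|F^{-1/2}Z\|_2^2$, the limit is $\chi^2_m$, proving the first assertion. For the confidence region, observe that $J\beta^0\in\mathcal{R}'(\alpha)$ is the event $N(J\widehat{b}-J\beta^0)^T\widehat{F}^{-1}(J\widehat{b}-J\beta^0)<\chi^2_{m,\alpha}$, which is exactly the event $\{T'<\chi^2_{m,\alpha}\}$ evaluated at the true parameter (with no null hypothesis invoked). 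By the same convergence $T'\overset{\mathcal{D}}{\to}\chi^2_m$ under the true $\beta^0$, the probability of this event tends to $P(\chi^2_m<\chi^2_{m,\alpha}) = 1-\alpha$.

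The main obstacle is step one, i.e.\ verifying that $\widehat{\Theta}$ is close enough to $\Theta_{\beta^0}$ in an entrywise sense that $J\widehat{\Theta}J^T$ is consistent for $F$. This is non-trivial because $\widehat{\Theta}$ is defined row-by-row through the quadratic programs \eqref{eq:qp} with tuning $\gamma\asymp\|\Theta_{\beta^0}\|_{1,1}\{\max_k|n_k/N-r_k|+s_0\lambda\}$, so the rate must be tracked through the KKT conditions for \eqref{eq:qp} together with the rate at which $\widehat{\Sigma}$ approaches $\Sigma_{\beta^0}$ in $\|\cdot\|_\infty$. Once that bound is in hand (as it must be for Theorem \ref{thm:simul} itself), the rest of the corollary is a standard application of Slutsky and the continuous mapping theorem, and no further calculation beyond what is already needed for Theorem \ref{thm:simul} is required.
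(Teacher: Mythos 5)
Your proposal is correct and follows exactly the route the paper intends: the Supplementary Material explicitly omits proofs of the corollaries as ``direct results of the corresponding theorems,'' and the standard Slutsky/continuous-mapping argument you give is that direct deduction. The one ingredient you flag as the ``main obstacle'' --- an entrywise rate for $\hTheta - \Thetabeta$ --- is already supplied by the paper's Lemma S.5, which gives $\| \hTheta - \Thetabeta \|_{\infty} = \OP(\gamma \| \Thetabeta \|_{1,1}) = \oP(1)$ under the stated tuning conditions (this is precisely the bound used for the variance term in the proof of Theorem \ref{thm:main}), so combined with $\| J \|_{\infty,\infty} = \mathcal{O}(1)$ your Step one closes with no further work.
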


	\section{Simulation study}
	\label{sec:simul}
	
	We conduct simulations to  examine the finite sample performance of the proposed  DBL-QP approach  in correcting  estimation biases and maintaining nominal   coverage probabilities of confidence intervals.  For comparisons, we also perform MSPLE, the oracle estimation,  \lu{and the three inference methods [``Nodewise'' for \citet{kong2018high}, ``CLIME'' for \citet{yu2018confidence},  and ``Decor'' for \citet{fang2017testing}] that are adapted to stratified Cox models}.  The following  scenarios pertain to four combinations of $(K, n_k, p)$, 	where $K, n_k$ and $p$ are the number of strata, stratum-specific sample size and the number of covariates, respectively.  Specifically,  Scenarios  1--3 refer to  $(K, n_k, p)=(10, 100,10)$, $(10, 100,100)$, and $(5, 200,100)$, respectively. 
 In Scenario 4, $K=40$, $p = 100$, $n_k$'s are simulated from a Poisson distribution with mean 40 and then fixed in all of the replications. This scenario mimics the situation of the recipient group aged over 60, the smallest group in the SRTR data.

	Covariates $X_{ki}$ are simulated from $N_p (0, \Sigma_x)$ and truncated at $\pm 3$, where $\Sigma_x$ has an AR(1) structure with the  $(i,j)$-th entry being $0.5^{|i-j|}$. The true regression parameters $\beta^0$ are sparse. Its first element $\beta_1^0$ varies from 0 to 2 by an increment of 0.2, 
	four additional elements are assigned values of 1, 1, 0.3 and 0.3 with their positions randomly generated and then fixed for all of the simulations, and all other elements are zero. The underlying survival times $T_{ki}$ are simulated from an exponential distribution with hazard  $\lambda(t|X_{ki}) = \lambda_{0k} \exp\{X_{ki}^T\beta^0\}$, where $\lambda_{0k}$  are generated from  $\mathrm{Uniform}(0.5,1)$ and then  fixed throughout. 
	As in \citet{fang2017testing} and \citet{fan2002variable},  the censoring times $C_{ki}$'s are simulated independently from  an exponential distribution with hazard  $\lambda_c(t|X_{ki}) = 0.2 \lambda_{0k} \exp\{X_{ki}^T\beta^0\}$,   resulting in an overall censoring rate around 20\%. 
		
	For the lasso estimator, we use 5-fold within-stratum  cross-validation to select  $\lambda$.  
	In Scenarios 1--3 with small numbers of strata, each stratum serves as a cross-validation fold for the  selection of $\gamma$;  in Scenario 4 with 40 strata,  we perform 10-fold cross-validation as described in Algorithm \ref{algo1} and randomly assign 4 strata to each fold. For each parameter configuration, we simulate  100 datasets, based on which we compare estimation biases of  $\beta^0_1$, 95\% confidence interval coverage probabilities,  model-based standard errors, and empirical standard errors across the \lu{six} methods.

	 \begin{landscape}
		\begin{figure}[htb!]
			\centering
			\includegraphics[height=0.8\textheight]{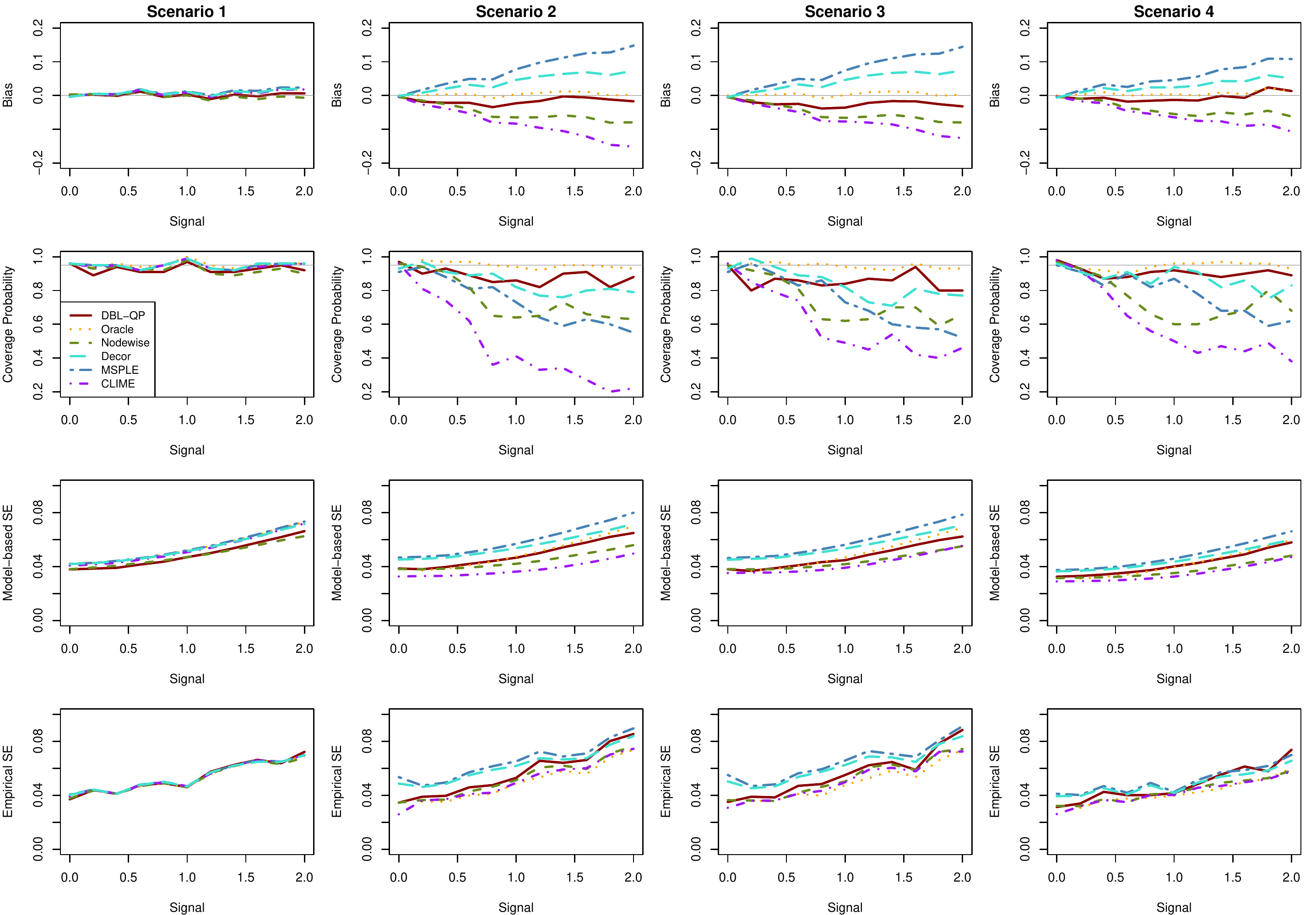}
			\caption{Estimation bias, 95\% coverage, model-based standard error, and empirical standard error for $\beta^0_1$ of six different methods.  Black horizontal lines are references to 0 for bias or 95\% for coverage probability.}
			\label{fig:4scenarios}
		\end{figure}
	\end{landscape}

	Figure \ref{fig:4scenarios} shows that, in Scenario 1 that features a small number of covariates ($p=10$), all \lu{six} methods perform well \lu{and similarly}; in Scenarios 2--4 with a relatively large number of covariates  ($p=100$), which is close to the number of covariates in the real data we will analyze,  \lu{our proposed  DBL-QP estimator well corrects the biases of the lasso estimates and maintains good confidence interval coverage (excluding the practically impossible ``Oracle'' estimator), but MSPLE, Nodewise, Decor and CLIME all present larger biases compared to DBL-QP as $\beta^0_1$ increases from 0 to 2.  CLIME, Nodewise and MSPLE have worse confidence interval coverage in general.
	As de-biased lasso methods, CLIME and Nodewise produce much smaller model-based standard error estimates,  which also contribute to their poor covarage probabilities. This is likely due to that both methods (CLIME and Nodewise) use penalized estimators for inverse information matrix estimation, and such penalization induces biases towards zero.}

	To recapitulate, the proposed  DBL-QP provides less biased estimates and better confidence interval coverage than \lu{the conventional MSPLE and three other competitors (Nodewise, Decor and CLIME adapted to the stratified setup)} when the sample size is moderate relative to the number of covariates,  although \lu{all methods} give almost identical results when $p$  is rather small. Hence, when $p < N$, our proposed  DBL-QP approach is at least as good as \lu{all the other methods}, and should be recommended for use.

	\section{Analysis of the SRTR kidney transplant data} 
	\label{sec:app}
	
	

\lu{The SRTR data set features 94 covariates from both donors and recipients, and the number of covariates is seen as relatively large for some recipient groups. With its reliable performance as demonstrated in simulations, we apply our  DBL-QP approach to analyze the SRTR data, while using MSPLE as a benchmark.}
The outcome is graft failure free survival,  the time from transplant to graft failure or death, whichever comes first. Our primary goal is to investigate the joint associations of these covariates with graft failure for three recipient groups   defined in  Table \ref{tab:character} separately. By simultaneously  considering all available donor and recipient covariates,
we aim to account for confounding  and provide asymptotically valid inference for the covariate effects, which differs from post hoc inference that only focuses on a smaller set of covariates selected by stepwise selection. The effect of donor age, in the presence of other risk factors,  is worth  investigating, as the debatable ``one-size-fit-all'' practice of donor-recipient age matching unfortunately is not suited for the benefit of transplantation \citep{keith2004effect,veroux2012age,dayoub2018effects}.

\subsection{Data details}

Included in our analysis are 9,195  recipients who received kidney-only transplants from deceased donors, had no prior solid organ transplants,  and were at least 18 years old at the time of transplantation during  2000 and 2001.  We focus  on those with these same cohort years in order to eliminate the cohort effect. Moreover, this group of patients had longer follow-up than those from the later cohort years. 
See Appendix \ref{appA} for a full list of included variables in the analysis.
 In the three receipts' age groups, respectively, the sample sizes are 3388, 4359 and 1448, the censoring rates are 53.1\%, 46.5\% and 30.0\%,  the median numbers of patients within each transplant center are 32, 31 and 27, and the restricted mean survival times by $13$ years are 9.1, 8.6 and 7.1 years.
	To select the tuning parameters, we implement 5-fold cross-validation   
	by randomly selecting one fifth of transplant centers without replacement as testing data and the rest as training data. 

	\subsection{Results} 
	
	We begin with examining the overall effect of donors' age on graft failure and testing the null hypothesis that, within each recipient group and after adjusting for the other risk factors,  all the donor age groups, i.e. $\le 10, (10, 20],
	(20,30], (30,40], (40,50], (50, 60]$ and $60+$, have the same risk of graft failure. Based  on Theorem \ref{thm:simul} and Corollary \ref{coro:thm2}, with $(m,p)=(6,94)$, we perform tests for the null contrasts, and the obtained statistics significantly reject the null hypotheses for all three recipient groups (within recipients aged 18-45: $\chi^2=40.4$, df=6, p-value=$3.9 \times 10^{-7}$; recipients aged 45-60: $\chi^2=34.5$, df=6, p-value=$5.3 \times 10^{-6}$; recipients aged over 60: $\chi^2=14.2$, df=6, p-value=$2.8 \times 10^{-2}$).
Indeed,	Figure \ref{fig:don_age}, which depicts the risk-adjusted effect of donors' age across the three recipient age groups,
	shows a general trend of increasing hazards for those receiving  kidneys from older donors, likely due to renal aging.  The estimates and confidence intervals obtained by our
	proposed  DBL-QP differ from those obtained by MSPLE, and the differences are the most obvious in the $60+$ year recipient group, which has the smallest sample size. As presented in our simulations, MSPLE may produce biased estimates with improper confidence intervals, especially when the sample size is relatively small.

	On the other hand, the  proposed DBL-QP  method may shed new light into the aging effect, which seems to be non-linear with respect to donors' age. First,
	using the results of Theorem \ref{thm:main} and Corollary \ref{coro:ci}, our tests detect  no significant   differences in hazards between those receiving kidneys from donors aged under 10 or $(10, 20]$ and  $(20, 30]$ (reference level) years old, within all the three recipient age groups.
	Second, significantly increased hazards are observed as early as when donors' age reached 30-40, as  compared to the reference level of $(20, 30]$, in the 18-45 years old recipient group, with an estimated hazard ratio (HR) of 1.16 (95\% CI: 1.01--1.34, p-value=$4.1\times 10^{-2}$). In contrast, there are no	significant differences between receiving organs from  $(30,40]$ years old donors and the reference level of $(20, 30]$, among the 
	45-60 years old recipients
	(HR= 0.96, 95\% CI: 0.85--1.09, p-value=$5.1\times 10^{-1}$)
and the	 $60+$ years old recipients  (HR=1.07, 95\% CI: 0.88--1.30, p-value=$5.0\times 10^{-1}$).
	Third,  kidneys from $60+$ years old donors confer the highest hazards, with the estimated risk-adjusted HRs (compared to the reference level $(20, 30]$) being 1.83 (95\% CI: 1.48--2.28, p-value=$4.3 \times 10^{-8}$), 1.40 (95\% CI: 1.21--1.61, p-value=$4.1\times 10^{-6}$) and  1.37 (95\% CI: 1.14--1.63, p-value=$5.2 \times 10^{-4}$) among the three recipient age groups respectively. 
	This means that, compared to the older recipients, recipients of  18-45 years old tend to experience a greater hazard of graft failure when receiving kidneys from donors over 60 years old.
	Caution needs to be exercised when allocating  kidneys from older donors to young patients \citep{lim2010donor,kabore2017age,dayoub2018effects}.

		\begin{figure}
		\centering
		\includegraphics[width=\textwidth]{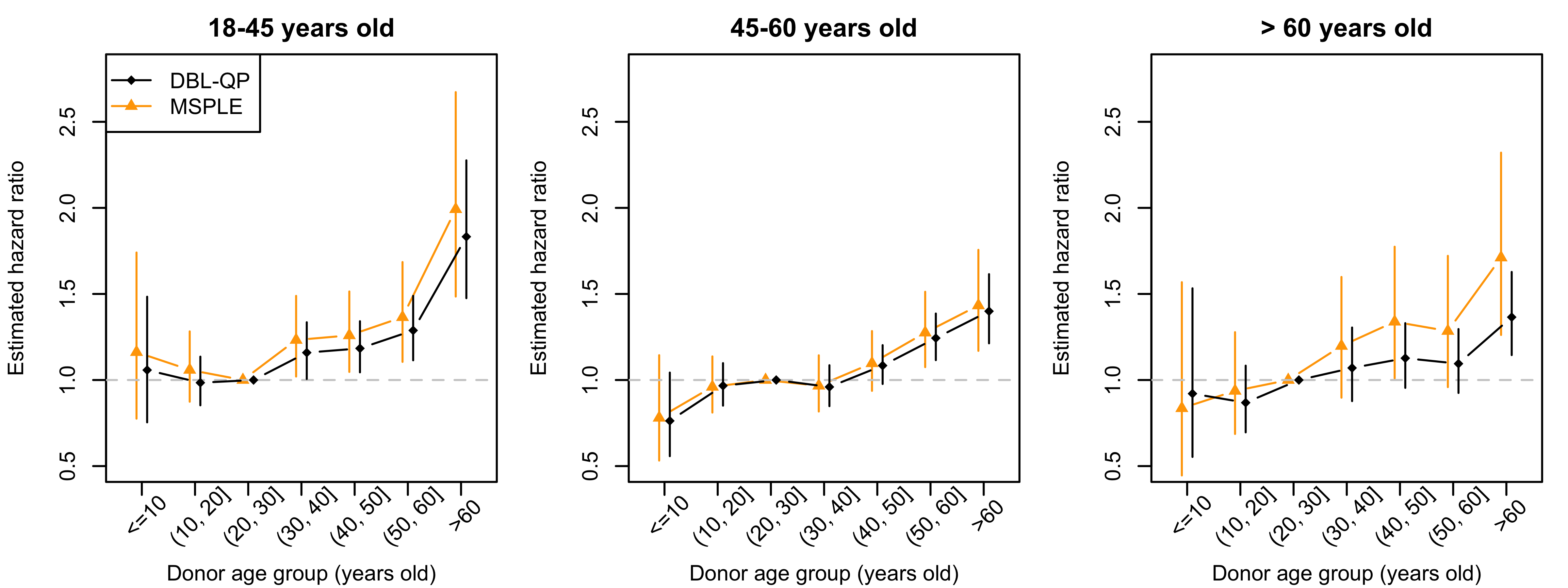}
		\caption{Estimated hazard ratios and the corresponding $95\%$ confidence intervals of different donor age categories with reference to the $(20, 30]$ donor age category, after adjusting for all other variables, in three recipient groups. 
		}
		\label{fig:don_age}
	\end{figure}

	Our method also delineates the associations of  clinical indicators with  graft failure,  provides more reliable inference,  and compares the relative strengths across recipient age groups.
\lu{By naively applying lasso, 64, 44 and 27 covariates are selected with non-zero coefficients in the 18-45, 45-60, and 60$+$ years old recipient groups, respectively.  In contrast,  the proposed DBL-QP identifies 22,  22 and 14 significant covariates in these three recipient groups, respectively,  from rigorous hypothesis tests with size 0.05 based on the asymptotic distribution.}
	Figure \ref{fig:ci_three} shows the estimated coefficients and their 95\% confidence intervals for covariates that are significant at level 0.05 in at least one recipient group. We highlight several noteworthy results.
	
	First,  recipients' primary kidney diagnosis  plays a critical role in kidney graft failure \citep{wolfe1991survival}. Compared to recipients with primary diagnosis of 
	diabetes (the reference level), those with polycystic kidneys (variable 2 in Figure \ref{fig:ci_three}) have a reduced risk of graft failure, with highly significant lower HRs of  0.54 (95\% CI: 0.42--0.70, p-value=$3.6 \times 10^{-6}$), 0.65 (95\% CI: 0.57--0.75, p-value=$4.4 \times 10^{-9}$) and   0.74 (95\% CI: 0.60--0.92, p-value=$5.3 \times 10^{-3}$) for the three age groups respectively.
	Compared to diabetes, 
	primary diagnosis of 
	glomerular disease (variable 26 in Figure \ref{fig:ci_three}) is significantly associated with a reduced risk of  graft failure only in the $60+$ years old recipient group (HR=0.79, 95\% CI: 0.66--0.96, p-value=$1.4 \times 10^{-2}$), and primary diagnosis of  hypertensive  nephrosclerosis (variable 29 in Figure \ref{fig:ci_three}) is
	significantly associated with a higher hazard of graft failure only in the 45-60 years old recipient group  (HR=1.12, 95\% CI: 1.01--1.23, p-value=$2.5 \times 10^{-2}$).

	Second, since diabetes is the most prevalent among end-stage renal patients \citep{kovesdy2010glycemic}, we code recipients' diabetic status at transplant as non-diabetic (reference level), diabetic for 0-20 years (variable 13 in Figure \ref{fig:ci_three}), and 20+ years (variable 3 in Figure \ref{fig:ci_three}).
	Our stratified analysis reveals that  diabetics is a stronger risk factor for young recipients aged between 18 and 45 years old than for older recipients, regardless of duration of diabetes.
	
	Third, instead of using the total number of mismatches as done in the literature, we consider the number of mismatches separately for each 
	HLA locus for more precisely pinpointing the effects of  mismatching loci. Our results reveal that the HLA-DR mismatches (variable 9 in Figure \ref{fig:ci_three})  are more strongly associated with graft failure than the HLA-A (variable 18 in Figure \ref{fig:ci_three}) and HLA-B mismatches (non-significant in any recipient group), which  are consistent with a meta-analysis based on 500,000 recipients \citep{shi2018impact}. 
	
	
	Finally, to study the granular impact of  recipient age on graft failure \citep{karim2014recipient},  we treat  recipient age (divided by 10) as a continuous variable (variable 4 in Figure \ref{fig:ci_three}) in the model within each recipient age group.  Interestingly, we find that
	increasing age is associated with a higher hazard 
	in the two older recipient groups (HR=1.31, 95\% CI: 1.19--1.44, p-value=$1.3\times 10^{-8}$, for recipients aged 45-60; HR=1.22, 95\% CI: 1.07--1.40, p-value=$3.6\times 10^{-3}$, for recipients aged 60+),
	but with  a lower hazard of graft failure  in the 18-45 recipient age group (HR=0.89, 95\% CI: 0.83--0.95, p-value=$5.2 \times 10^{-4}$).  
	This is likely because that younger patients  generally had poorer adherence to treatment, resulting in higher risks of graft loss \citep{kabore2017age}. The results also reinforce the necessity of separating analyses for different recipient age groups.
	
	

\begin{figure}
    \centering
    \includegraphics[width=0.82\textwidth]{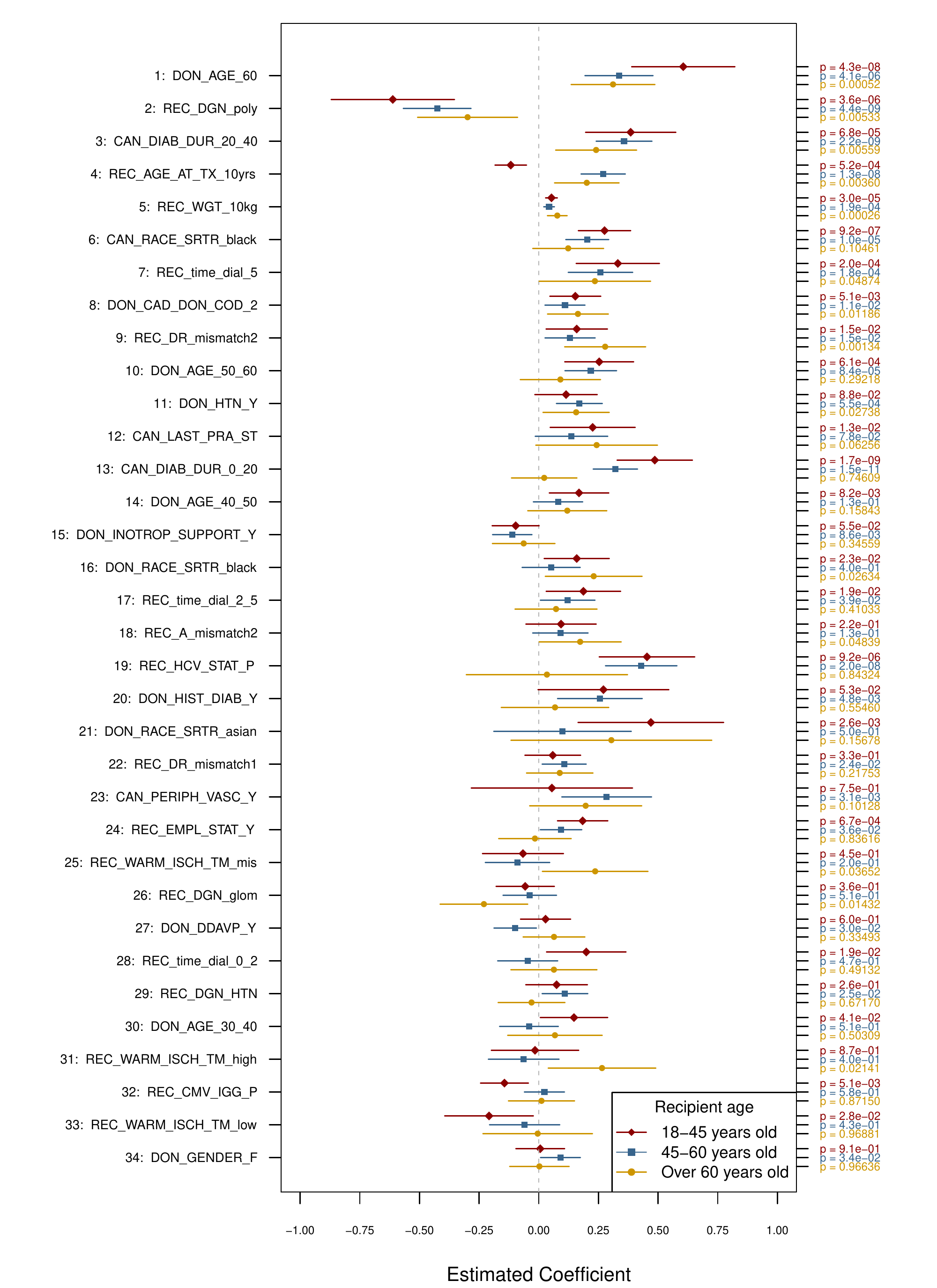}
    \caption{Estimated regression coefficients in the stratified Cox models using the proposed  DBL-QP method, and the corresponding $95\%$ confidence intervals, presented  by recipient age group. The covariates included are significant at level 0.05 in at least one recipient group, after adjusting for all other covariates.}
    \label{fig:ci_three}
\end{figure}

As a side note, we compare DBL-QP and  MSPLE in the estimated coefficients and standard errors.
Figure \ref{fig:QPvsMPLE_all} shows that in the 45-60 age group with the largest number of subjects, the point estimates obtained by the two methods almost coincide with each other,    whereas in the  $60+$ age group with the smallest sample size,  MSPLE tends to have larger absolute estimates than the de-biased lasso.
Moreover, the standard errors estimated by MSPLE are likely to be larger than those by  our method across all the age groups. 
These observations agree with the results of our simulations (Scenarios 2--4), which show that MSPLE yields large biases in estimated coefficients and standard errors, especially when the sample size is relatively small, whereas our proposed  DBL-QP method  draws more valid inferences by maintaining proper type I errors and coverage probabilities.
	
		\begin{figure}
		\centering
		\includegraphics[width=\textwidth]{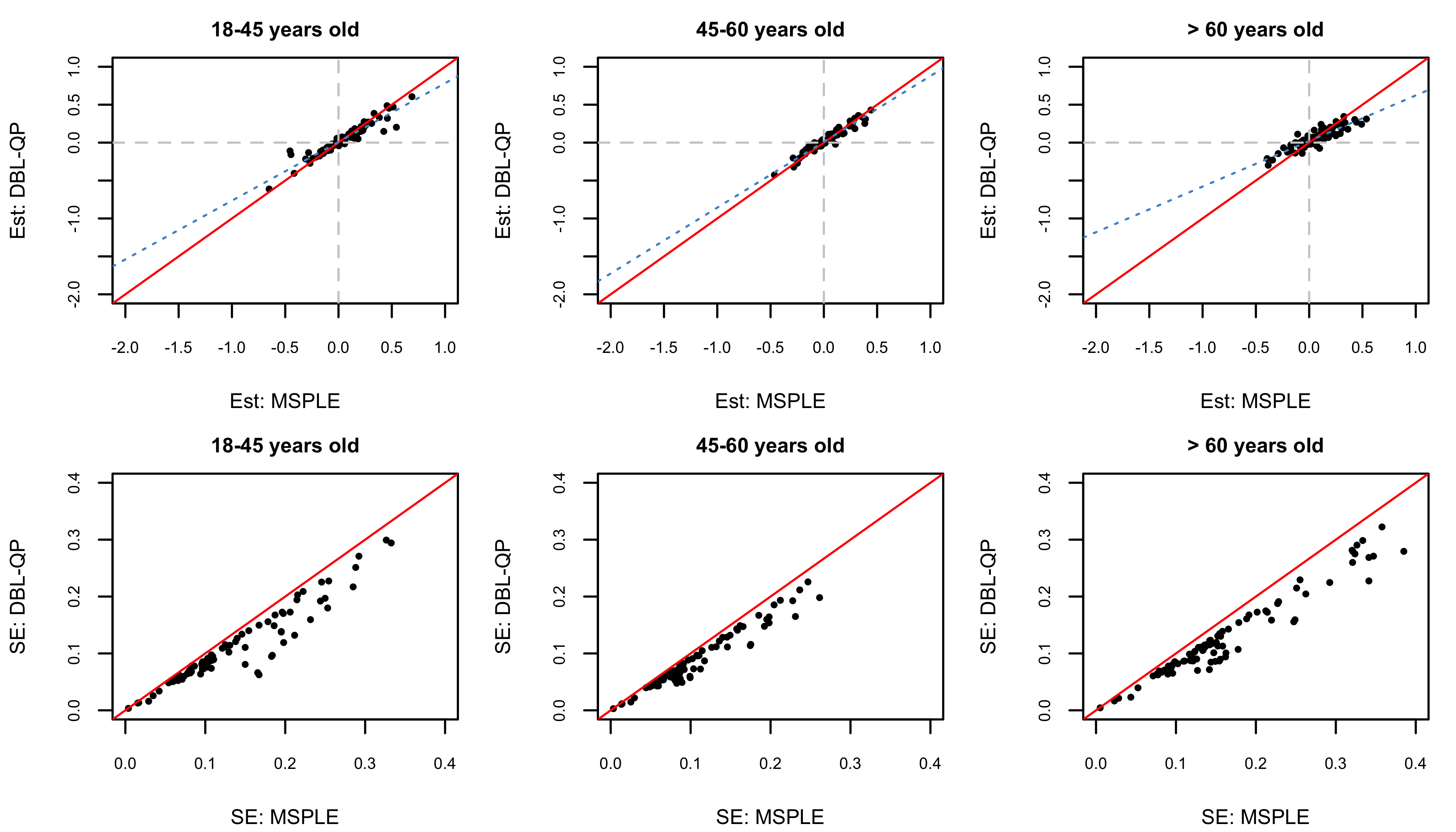}
		\caption{Comparison between the coefficient estimates (top) and the model-based standard errors (bottom) by the de-biased lasso (DBL-QP) and the maximum stratified partial likelihood estimation (MSPLE) in three recipient age groups. The solid red lines are 45-degree reference lines, and the dotted blue lines in the top figures represent the fitted linear regression of the DBL-QP estimates on the MSPLE estimates.}
		\label{fig:QPvsMPLE_all}
	\end{figure}

	\section{Concluding remarks} 
	\label{sec:discuss}
	
	The work is motivated by 
	an urgent call of  better understanding the complex mechanisms behind post-kidney transplant graft failure. Our modeling framework is 
	Cox models stratified by transplant centers,  due to their  strong confounding effects  on graft failure. 
	To  adjust for confounders to the extent possible, we have included an extended list of 94 covariates from recipients and donors, which has not been done in the literature.  A particular scientific question to address is the debatable donor-recipient age matching criterion in kidney transplantation. Fitting separate  models by recipient age enables direct assessments of the donor age effects in different recipient age groups, which  differs from using donor-recipient age difference as in \citet{ferrari2011effect}.
	Specifically,  we have followed a common practice of fitting separate models in age groups of 18-45, 45-60 and $60+$ years.  The commonly used MSPLE yielded biased estimates and  unreliable inference in some smaller age groups, though the samples outnumbered the covariates. In particular, the $60+$ years recipient group had only 1448 recipients in 43 different transplant centers, and MSPLE yielded more dramatic estimates for those donor age effects of over 30 years old (Figure \ref{fig:don_age}).  Our simulation results    also confirmed such a problematic phenomenon. Therefore, a statistical method that can guarantee reliable estimates and valid inference is much needed for delineating the associations of interest with graft failure when the number of covariates is relatively large in stratified Cox models. 
	
	Inspired by the de-biased lasso method for linear regression \citep{javanmard2014confidence}, we have developed a de-biased lasso approach via quadratic programming for stratified Cox models. \lu{Despite progress made in high-dimensional inference for Cox models, virtually no work has considered stratified settings, theoretically or empirically. We have shown that
	in the ``large $N$, diverging $p$'' scenario, our approach possesses desirable 
	asymptotic properties and finite-sample performance, and is more suitable for the analysis of the SRTR data than the competing methods illustrated in our simulation studies.
	Computationally,  based on a previous work on Cox models without stratification \citep{xia2021cox}, for the estimation of $\Thetabeta$, the computational speed using \texttt{solve.QP} in R was much faster than that using the R packages \texttt{clime} or \texttt{flare} adopted by \citet{yu2018confidence}.  }

	 Applications of our method to the SRTR data generated new biological findings.  After categorizing donors' age and controlling for other risk factors listed in Appendix \ref{appA}, we find that {organs from  older donors} are associated with an increased hazard of graft failure and that the dependence on donors' age is non-linear: within the youngest recipient group (18-45 years),  significant differences from the reference donor age category (20-30 years) were detected as early as when donors reached 30-40 years old, whereas significant differences were detected only when donors reached 50-60 or 60+ years  
	 within the two older recipient groups, respectively; in other words, receiving kidneys from older or younger donors, such as $60+$ versus 20-30 years, presented larger differences than in the other two recipient groups. These results, which were not reported in the literature, may provide new empirical evidence to aid stake-holders, such as patients, families, physicians and  policy makers, in decisions  on donor-recipient age matching.  
	
	\lu{A few technical points are noteworthy.  First,} our work deals with the ``large $N$, diverging $p$'' scenario, as embedded in the motivating data, and approximates $\Theta_{\beta^0}$ via quadratic programming without positing any sparsity conditions on $\Theta_{\beta^0}$. This distinguishes from the related  literature \citep{fang2017testing,yu2018confidence,kong2018high} \lu{in the ``large $p$, small $N$'' scenario} that relies upon sparsity conditions on the numbers of non-zero elements in the rows of $\Theta_{\beta^0}$, which \lu{are hardly discussed in depth and may not hold nor have explicit interpretations for Cox models}.  \lu{For example, when the rows of $\Theta_{\beta^0}$ are not sparse, our dimension requirement for $p$ is less stringent than in \citet{yu2018confidence}, by a factor of $\sqrt{\log(Np)}$.  } \lu{Moreover, when $p > N$, the several de-biased methods aforementioned may not yield reliable inference results, as empirically $\Theta_{\beta^0}$  cannot be estimated well, and biases in the lasso estimator are often not sufficiently corrected for in this scenario for Cox models. New approaches, such as sample-splitting approaches \citep{fei2021estimation}
	that  bypass the estimation of $\Theta_{\beta^0}$,  can be consulted.}
	
\lu{Second, tuning parameter selection is critical in high-dimensional inference. Our proposed method deploys a single tuning parameter $\gamma$ for de-biasing the estimates of all $\beta_j$'s. This is a computationally feasible and commonly adopted strategy, presenting a satisfactory performance in our numerical studies, and can  be extended to adapt to the variability of individual coefficient estimation. For example, one may consider the following estimation procedure for the $j$th row of $\hat{\Theta}$ along the line of adaptive CLIME \citep{cai2016estimating}:
\[
\min_m \{ m^T \hat{\Sigma} m: | (\hat{\Sigma} m - e_j)_k | \le \gamma_{jk},  k = 1, \ldots, p \}.
\]
Here, $\gamma_{jk}$'s are supposed to be adaptively estimated through a carefully designed procedure. However, the design of such an appropriate procedure requires complicated theoretical analysis in Cox models, unstratified or stratified, to determine the desirable rates of $ \gamma_{jk}$'s, among other tasks. Given that such complexity is beyond the scope of this paper, we will not pursue this route here in details but will leave it for future research.}

\lu{ Third,  though primarily focusing on the associations between the risk factors and survival (through Theorem \ref{thm:main}), the proposed method can  be used for patient risk scoring and conditional survival probability estimation. For example, the de-biased estimates may be plugged into the Breslow's estimator \citep{kalbfleisch2002statistical} for stratum-specific baseline hazards. The conditional survival probability estimation may not go beyond the time point $\tau$ due to censoring.  }

\lu{ Lastly, we use Cox models stratified by transplant centers to account for but avoid explicitly modeling the center effects. Alternatively, random effects models can be used for clustered survival data analysis; for example, \citet{vaida2000proportional} generalized the usual frailty model to allow multivariate random effects. However, in a random effects model,  the distribution of random effects needs to be specified, and  the coefficients only have conditional interpretations, given a cluster. We may pursue this elsewhere.
}
	
\lu{We have implemented the proposed DBL-QP method with cross-validation in R and Rcpp, which is available online at \href{https://github.com/luxia-bios/StratifiedCoxInference/}{https://github.com/luxia-bios/StratifiedCoxInference/} with simulated examples.}

\section*{Appendix}
\begin{appendix}
\section{SRTR data}\label{appA}
The SRTR dataset analyzed in this article can be accessed by applying through the OPTN website \href{https://optn.transplant.hrsa.gov}{https://optn.transplant.hrsa.gov}. The interpretation and reporting of the SRTR data results are solely the responsibility of the authors and should not be viewed as official opinions of the SRTR or the United States Government.
	
	The 94 covariates, including dummy variables, are derived from the following factors. Donor factors include: ABO blood type, age, cytomegalovirus antibody, hepatitis C virus antibody, cause of death, cardiac arrest since event leading to declaration of death, serum creatinine, medication given to donor (DDAVP, dopamine and dobutamine), gender, height,  history of cancer, cigarette smoking, history of drug abuse, hypertension,  diabetes, inotropic support, inotropic agents at time of incision, non-heart beating donor, local or shared organ transplant, race, and weight.
	Recipient factors include: ABO blood type, history of diabetes and duration, angina/coronary artery disease, symptomatic peripheral vascular disease, drug treated systemic hypertension, drug treated COPD, gender (and previous pregnancies for females), sensitization (whether peak and/or current panel-reactive antibodies exceed 20\%), previous malignancy, peptic ulcer disease, symptomatic cerebrovascular disease, race, total serum albumin,  age at transplant, number of HLA mismatches (A, B and DR), cytomegalovirus status, total cold ischemic time, primary kidney diagnoses, pre-transplant dialysis and duration, the Epstein–Barr virus serology status, employment status, hepatitis B virus status, hepatitis C virus status, height, pre-implantation kidney biopsy, pre-transplant blood transfusions, transplant procedure type, warm ischemic time and weight.

\section{Regularity conditions}\label{appB}
\renewcommand{\theassumption}{B.\arabic{assumption}}

Assumptions \ref{assump1}--\ref{assump5} below ensure that  Theorem \ref{thm:main} hold.
	
	\begin{assumption}\label{assump1}
		Covariates are almost surely uniformly bounded, i.e. $\| X_{ki} \|_{\infty} \le M$ for some positive constant $M<\infty$ for all $k$ and $i$.
	\end{assumption}
	
	\begin{assumption}
		\label{assump2}
		$| X_{ki}^T \beta^0 | \le M_1$ uniformly for all $k$ and $i$ with some positive constant $M_1 < \infty$ almost surely.
	\end{assumption}
	
	\begin{assumption}
		\label{assump3}
		The follow-up time stops at a finite time point $\tau > 0$, with probability $\pi_0 = \min_{k} P (Y_{ki} \ge \tau) > 0$.
	\end{assumption}

\begin{assumption}
		\label{assump4}
		 \lu{ For any $t\in [0, \tau]$, 
		\[
		\frac{c^T \Theta_{\beta^0}}{c^T \Theta_{\beta^0} c}  \left[ \sum_{k=1}^K r_k \int_0^t  \left\{ \mu_{2k}(u; \beta^0) - \frac{\mu_{1k}(u; \beta^0) \mu_{1k}(u; \beta^0)^T}{\mu_{0k}(u; \beta^0)} \right\} \lambda_{0k}(u) d u \right] \Theta_{\beta^0} c \rightarrow v(t; c)
		\]
		as  $n \rightarrow \infty$ for some function $v(t; c) > 0$ of $t$ that also depends on the choice of $c$. }
	\end{assumption}  
	
	\begin{assumption}
		\label{assump5}
		\lu{There exists a constant $\epsilon_0>0$ such that $\lambda_{\mathrm{min}}(\Sigma_{\beta^0}) \ge \epsilon_0$, where $\lambda_{\mathrm{min}} (\cdot)$ is the smallest eigenvalue of a matrix. 
         }
	\end{assumption}
	

	For inference on multiple linear combinations or contrasts as described in Theorem \ref{thm:simul}, Assumption \ref{assump4} needs to be replaced with the following Assumption \ref{assump6}, which is a multivariate version of Assumption \ref{assump4}.
	
	\begin{assumption}
		\label{assump6}
		\lu{ For any $\omega \in \mR^{m}$ and any $t\in [0, \tau]$, 
		\[
		\frac{\omega^T J \Theta_{\beta^0} }{\omega^T J  \Theta_{\beta^0} J^T \omega}  \left[ ~ \sum_{k=1}^K r_k \int_0^t  \left\{ \mu_{2k}(u; \beta^0) - \frac{\mu_{1k}(u; \beta^0) \mu_{1k}(u; \beta^0)^T}{\mu_{0k}(u; \beta^0)} \right\} d\Lambda_{0k}(u) \right] \Theta_{\beta^0} J^T \omega 
		\]
	converges to $v^{\prime}(t; \omega, J)$	as  $n \rightarrow \infty$, for some function $v^{\prime}(t; \omega, J) > 0$ of $t$, that also depends on the choice of $\omega$ and $J$. }
	\end{assumption}

 \lu{It is common in the literature of high-dimensional inference to assume bounded covariates as in Assumption B.1. \citet{fang2017testing} and \citet{kong2018high} also posed Assumption B.2 for Cox models, i.e. uniform boundedness on the multiplicative hazard. Under Assumption B.1, Assumption B.2 can be implied by the bounded overall signal strength $\| \beta^0 \|_1$. Assumption B.3 is a common assumption in survival analysis \citep{andersen1982cox}. Assumption B.4
and its multivariate version, Assumption B.6,  ensure the convergence of the variation process, which is key in applying the martingale central limit theorem. They are less stringent comparing to the boundedness assumption on $\| \Theta_{\beta^0} X_{ki} \|_{\infty}$ that is equivalent to the assumptions for statistical inference in \citet{van2014asymptotically} on high-dimensional generalized linear models and in \citet{fang2017testing} on high-dimensional Cox models.  The boundedness of the smallest eigenvalue of $\Sigma_{\beta^0}$ in Assumption B.5 is common in inference for high-dimensional models \citep{van2014asymptotically,kong2018high}.  
Since we focus on random designs, unlike \citet{huang2013oracle},  \citet{yu2018confidence} and \citet{fang2017testing}, we do not directly assume the compatibility condition on $\ddot{\ell}(\beta^0)$;  instead, we impose Assumption B.5 on the population-level matrix $\Sigma_{\beta^0}$, which leads to the compatibility condition for a given data set with probability going to one.}

\end{appendix}

\section*{Acknowledgements}
This work was supported in part by NIH Grants (R01AG056764, R01AG075107, R01CA249096 and U01CA209414)  and NSF Grant (DMS-1915711). 
\bibliographystyle{apalike} 
\bibliography{References3.bib}       


\newpage

\setcounter{section}{0}
\renewcommand{\thesection}{S.\arabic{section}}
\renewcommand{\theequation}{S.\arabic{equation}}
\renewcommand{\thelemma}{S.\arabic{lemma}}

\begin{center}
	{\bf \Large
	Supplement to ``De-biased lasso for stratified Cox models with application to the national kidney transplant data"} \\[1em]
{\large Lu Xia\textsuperscript{a}, Bin Nan\textsuperscript{b}, and Yi Li\textsuperscript{c} \\[1em] }
	{\small 
	\textsuperscript{a}Department of Biostatistics, University of Washington, Seattle, WA \\ \textsuperscript{b}Department of Statistics, University of Californina, Irvine, CA \\ \textsuperscript{c}Department of Biostatistics, University of Michigan, Ann Arbor, MI }
\end{center}

For completeness of presentation, we first provide some useful  lemmas and their proofs, and then give the proofs of the main theorems in this supplementary material. Since corollaries are direct results of the corresponding theorems, their proofs are straightforward  and thus omitted. 

\section{Technical Lemmas}

For the $i$th subject in the $k$th stratum, define the counting process $N_{ki}(t) = 1(Y_{ki} \le t, \delta_{ki} = 1)$. The corresponding intensity process $A_{ki}(t; \beta) = \int_0^t 1 (Y_{ki} \ge s) \exp(X_{ki}^T \beta) d\Lambda_{0k}(s)$, where $\Lambda_{0k}(t) = \int_{0}^{t} \lambda_{0k}(s) ds$ is the baseline cumulative hazard function for the $k$th stratum, $k=1, \cdots, K$, $i = 1, \cdots, n_k$. Let $M_{ki}(t; \beta) = N_{ki}(t) - A_{ki}(t; \beta)$, then $M_{ki}(t; \beta^0)$ is a martingale with respect to the filtration $\mathcal{F}_{ki}(t) = \sigma \{ N_{ki}(s), 1(Y_{ki} \ge s), X_{ki}: s \in (0, t] \}$. 

Recall that the stratum-specific weighted covariate process $\heta_k(t; \beta) = \hmu_{1k} (t; \beta) / \hmu_{0k} (t; \beta)$, where $\hmu_{rk} (t; \beta) = (1/ n_k) \sum_{i=1}^{n_k} 1(Y_{ki} \ge t) X_{ki}^{\otimes r} \exp\{X_{ki}^T \beta\}$. Their population-level counterparts are $\mu_{rk} (t; \beta) = \E [1(Y_{k1} \ge t) X_{k1}^{\otimes r} \exp\{X_{k1}^T \beta\}]$ and $\eta_{k0}(t; \beta) = \mu_{1k} (t; \beta) / \mu_{0k} (t; \beta)$,  $r = 0, 1, 2$, $k=1, \cdots, K$. It is easily seen that the process $\{X_{ki} - \heta_k(t; \beta^0)\}$ is predictable with respect to the filtration $\mathcal{F}(t) = \sigma \{ N_{ki}(s), 1(Y_{ki} \ge s), X_{ki}: s \in (0, t], k=1, \cdots, K, ~ i=1, \cdots, n_k \}$.

\begin{lemma} \label{chap4:lemma:mom}
	Under Assumptions B.1--B.3, for $k = 1, \cdots, K$, we have
	\begin{align*}
		& \sup_{t \in [0, \tau]} | \hmu_{0k}(t; \beta^0) - \mu_{0k}(t; \beta^0) |  = \OP(\sqrt{\log(p) / n_k}), \\
		& \sup_{t \in [0, \tau]} \| \hmu_{1k}(t; \beta^0) - \mu_{1k}(t; \beta^0) \|_{\infty}  = \OP(\sqrt{\log(p)/n_k}), \\
		& \sup_{t \in [0, \tau]} \| \heta_k(t; \beta^0) - \eta_{k0}(t; \beta^0)  \|_{\infty}  = \OP(\sqrt{\log(p)/n_k}).
	\end{align*}
\end{lemma}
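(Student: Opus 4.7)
The plan is a grid-plus-union-bound argument for the first two statements, followed by a standard ratio decomposition for the third. The rate $\sqrt{\log(p)/n_k}$ stated for the scalar $\hmu_{0k}$ is loose (the sharp rate being $\OP(n_k^{-1/2})$ up to logs) but convenient; it will be obtained as a byproduct of the vector argument.

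\textbf{Step 1 (uniform moment concentration).} Fix $r \in \{0,1\}$ and a coordinate index $j$, and consider the scalar summand $g_{ki,r,j}(t) = 1(Y_{ki} \ge t)\, X_{ki,j}^{r} \exp(X_{ki}^T \beta^0)$ (with the convention $X_{ki,j}^{0} \equiv 1$). Under Assumptions B.1 and B.2, $|g_{ki,r,j}(t)| \le M e^{M_1}$ uniformly in $(t, k, i, j)$. For $r = 0$ the summand is non-negative and non-increasing in $t$; for $r = 1$, a decomposition according to the sign of $X_{ki,j}$ exhibits $g_{ki,1,j}$ as a difference of two non-negative non-increasing functions of $t$, each bounded by $M e^{M_1}$. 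Consequently both $\hmu_{rk,j}(\cdot;\beta^0)$ and $\mu_{rk,j}(\cdot;\beta^0)$ have total variation on $[0,\tau]$ bounded by a universal constant. I would lay down a deterministic grid $0 = t_0 < \cdots < t_N = \tau$ with $N \asymp n_k$ chosen fine enough that $\max_l |\mu_{rk,j}(t_{l-1};\beta^0) - \mu_{rk,j}(t_l;\beta^0)| \lesssim 1/n_k$, and apply Hoeffding's inequality at each pair:
\begin{equation*}
\PP\bigl( |\hmu_{rk,j}(t_l;\beta^0) - \mu_{rk,j}(t_l;\beta^0)| > K\sqrt{\log(p)/n_k} \bigr) \le 2\exp(-c K^2 \log p),
\end{equation*}
with $c>0$ depending only on $M$, $M_1$. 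A union bound over the $pN$ index pairs yields a vanishing tail for $K$ large, provided $\log n_k \lesssim \log p$ (a mild condition in the diverging-$p$ regime). Monotonicity of the two parts then bounds the within-cell supremum by the endpoint maximum plus the grid oscillation of $\mu_{rk,j}$, giving the first two stated bounds.

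\textbf{Step 2 (ratio bound for $\heta_k$).} I would use the identity
\begin{equation*}
\heta_k(t;\beta^0) - \eta_{k0}(t;\beta^0) = \frac{\hmu_{1k}(t;\beta^0) - \mu_{1k}(t;\beta^0)}{\hmu_{0k}(t;\beta^0)} - \frac{\mu_{1k}(t;\beta^0)}{\mu_{0k}(t;\beta^0)} \cdot \frac{\hmu_{0k}(t;\beta^0) - \mu_{0k}(t;\beta^0)}{\hmu_{0k}(t;\beta^0)}.
\end{equation*}
Under Assumptions B.2 and B.3 plus monotonicity of $\mu_{0k}(\cdot;\beta^0)$ in $t$, $\inf_{t \in [0,\tau]} \mu_{0k}(t;\beta^0) \ge \mu_{0k}(\tau;\beta^0) \ge \pi_0 e^{-M_1} > 0$, while the Step 1 bound for $\hmu_{0k}$ forces $\inf_t \hmu_{0k}(t;\beta^0) \ge \pi_0 e^{-M_1}/2$ with probability tending to one. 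Since $\|\mu_{1k}(t;\beta^0)/\mu_{0k}(t;\beta^0)\|_{\infty}$ is uniformly bounded by Assumption B.1, the componentwise $\ell_\infty$ bound on $\heta_k - \eta_{k0}$ then follows by applying Step 1 to the numerators.

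\textbf{Main obstacle.} The delicate point is matching the dimension-dependent rate $\sqrt{\log(p)/n_k}$ \emph{simultaneously} in $t$ and in the coordinate index $j$ while paying only a single $\log p$ factor; a naive bracketing or chaining argument over $t$ combined with a union bound over $j$ would in general contribute a further $\sqrt{\log n_k}$. The monotonicity of the summands in $t$, exploited via the sign decomposition above, is what allows a polynomial-in-$n_k$ deterministic grid, so that the combined union bound costs only $\log(p n_k) \asymp \log p$ under the working condition $\log n_k \lesssim \log p$. The positive lower bound on $\mu_{0k}$ on $[0,\tau]$, furnished by Assumption B.3, is the key link that makes Step 2 go through and transfers the numerator rate to $\heta_k$.
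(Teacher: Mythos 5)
Your argument is correct. Note that the paper itself gives no proof of this lemma: it simply states that the result is the stratum-by-stratum extension of Lemma A1 of \citet{xia2021cox} and omits the details, and that omitted proof rests on exactly the ingredients you use --- boundedness from Assumptions B.1--B.2, the decomposition of each summand into non-negative non-increasing (in $t$) pieces so that a polynomial-size deterministic grid plus Hoeffding and a union bound over the $p$ coordinates controls the supremum, and then the ratio decomposition for $\heta_k$ using the lower bound $\mu_{0k}(t;\beta^0)\ge \pi_0 e^{-M_1}$ supplied by Assumption B.3. The one caveat you rightly flag --- that the union bound over the $\asymp n_k$ grid points costs an extra $\log n_k$ unless $\log n_k \lesssim \log p$ --- is real for this elementary route, though it is harmless in the paper's diverging-$p$ regime and can be removed entirely by instead bounding $\sup_t$ for each fixed coordinate via a DKW/VC-type inequality for monotone classes (giving $\OP(n_k^{-1/2})$ per coordinate with sub-Gaussian tails) and paying $\sqrt{\log p}$ only for the maximum over $j$.
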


Lemma \ref{chap4:lemma:mom} is the extension of the results of Lemma A1 in \citet{xia2021cox} to each of the $K$ strata. We omit its proof here.

\begin{lemma} \label{chap4:lemma:lead}
	Assume $ \| \Theta_{\beta^0} \|_{1,1}^2 \{ \sqrt{\log(p) / n_{min}} + \max_k | n_k/N - r_k | \} \to 0$, and $\{ c^T \Thetabeta c \}^{-1} = \mathcal{O}(1)$. Under Assumptions B.1--B.5, for any $c \in \mathbb{R}^p$ such that $\| c \|_2 = 1$ and $\| c \|_1 \le a_*$ with some absolute constant $a_* < \infty$, we have
	\[
	\displaystyle \frac{ \sqrt{N} c^T \Thetabeta \dot{\ell}(\beta^0)}{\sqrt{c^T \Thetabeta c}} \overset{\mathcal{D}}{\rightarrow} N(0,1).
	\]
\end{lemma}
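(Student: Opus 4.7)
The plan is to express $\dot{\ell}(\beta^0)$ as a counting-process martingale integral and then invoke Rebolledo's martingale central limit theorem applied to the linear functional $c^T \Theta_{\beta^0} \dot{\ell}(\beta^0)$. Using the identity $\sum_{i=1}^{n_k} \{X_{ki} - \hat{\eta}_k(t;\beta^0)\} 1(Y_{ki}\ge t)\exp(X_{ki}^T \beta^0) = 0$, which follows from the very definition of $\hat{\eta}_k$, the compensator part integrates to zero and I would rewrite
\[
\sqrt{N}\, c^T \Theta_{\beta^0}\, \dot{\ell}(\beta^0) \;=\; -\frac{1}{\sqrt{N}} \sum_{k=1}^K \sum_{i=1}^{n_k} \int_0^{\tau} c^T \Theta_{\beta^0} \{X_{ki} - \hat{\eta}_k(t;\beta^0)\}\, dM_{ki}(t;\beta^0).
\]
Splitting the integrand as $c^T \Theta_{\beta^0} \{X_{ki} - \eta_{k0}(t;\beta^0)\} + c^T \Theta_{\beta^0}\{\eta_{k0}(t;\beta^0) - \hat{\eta}_k(t;\beta^0)\}$ produces a decomposition $-W_N(\tau) - R_N$ into two square-integrable martingales, since both integrands are predictable with respect to $\mathcal{F}(t)$. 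It then suffices to show $W_N(\tau)/\sqrt{c^T \Theta_{\beta^0} c}\Rightarrow \mathcal{N}(0,1)$ and $R_N/\sqrt{c^T \Theta_{\beta^0} c} = o_P(1)$.

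For the leading martingale $W_N(\tau)$, I would verify Rebolledo's two hypotheses. The predictable variation process is
\[
\langle W_N \rangle(t) = \frac{1}{N}\sum_{k=1}^K\sum_{i=1}^{n_k}\int_0^t \bigl[c^T \Theta_{\beta^0}(X_{ki}-\eta_{k0}(s;\beta^0))\bigr]^2 1(Y_{ki}\ge s)\exp(X_{ki}^T \beta^0)\, d\Lambda_{0k}(s),
\]
whose expectation reproduces the bilinear form in Assumption B.4 weighted by $n_k/N\to r_k$, collapsing at $t=\tau$ to $c^T\Theta_{\beta^0}\bigl(\sum_k r_k \Sigma_{\beta^0,k}\bigr)\Theta_{\beta^0} c = c^T\Theta_{\beta^0} c$. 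Assumption B.4 together with a within-stratum law of large numbers applied to the almost surely bounded integrand (Assumption B.1 plus $\|c\|_1\le a_*$ gives the bound $(2Ma_*\|\Theta_{\beta^0}\|_{1,1})^2$) then yields $\langle W_N\rangle(t)/(c^T\Theta_{\beta^0} c)\to v(t;c)$ in probability with $v(\tau;c)=1$. The Lindeberg negligibility condition follows from that same pointwise bound divided by $\sqrt{N c^T \Theta_{\beta^0} c}$, which tends to zero under the hypothesis $\|\Theta_{\beta^0}\|_{1,1}^2\sqrt{\log(p)/n_{min}}\to 0$ (which implies $\|\Theta_{\beta^0}\|_{1,1}^2/N\to 0$) combined with $\{c^T\Theta_{\beta^0} c\}^{-1}=\mathcal{O}(1)$.

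The main obstacle will be controlling the remainder $R_N$, because $\hat{\eta}_k(s;\beta^0)-\eta_{k0}(s;\beta^0)$ is a $p$-dimensional random process and replacing $\eta_{k0}$ by $\hat{\eta}_k$ destroys the clean factorization used for the leading term. Since the integrand is nevertheless predictable, $R_N$ remains a square-integrable martingale, and I would bound its predictable variation via $|c^T\Theta_{\beta^0}(\eta_{k0}(s)-\hat{\eta}_k(s))|\le \|c\|_1\|\Theta_{\beta^0}\|_{1,1}\sup_{k,s}\|\eta_{k0}(s;\beta^0)-\hat{\eta}_k(s;\beta^0)\|_\infty$, invoking Lemma~\ref{chap4:lemma:mom} to obtain
\[
\langle R_N\rangle(\tau) = \mathcal{O}_P\!\left(\|\Theta_{\beta^0}\|_{1,1}^2\,\log(p)/n_{min}\right) = o_P(1)
\]
under the stated rate condition. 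Lenglart's inequality then upgrades this to $R_N=o_P(1)$, and combining with the CLT for $W_N(\tau)$ via Slutsky yields the lemma. The delicate point throughout is that the uniform rate in Lemma~\ref{chap4:lemma:mom} must be fast enough to absorb the $\ell_1$-induced matrix norm $\|\Theta_{\beta^0}\|_{1,1}$, which is precisely why that specific rate hypothesis appears in the statement and distinguishes the argument from the unstratified, fixed-$p$ martingale analysis.
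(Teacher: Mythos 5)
Your proof is correct and rests on the same backbone as the paper's: the martingale representation of $\dot{\ell}(\beta^0)$ obtained from the identity $\sum_i\{X_{ki}-\heta_k(t;\beta^0)\}1(Y_{ki}\ge t)e^{X_{ki}^T\beta^0}=0$, convergence of the predictable variation to the limit $v(t;c)$ of Assumption B.4, uniform negligibility of the jumps of size $\mathcal{O}(\|\Thetabeta\|_{1,1}/\sqrt{N})$, and the martingale CLT. The one structural difference is how the data-dependent centering $\heta_k$ is handled. The paper keeps $X_{ki}-\heta_k(t;\beta^0)$ inside a single martingale $U(t)$ and shows directly that its variation process, which involves $\hmu_{2k}-\hmu_{1k}\hmu_{1k}^T/\hmu_{0k}$, converges uniformly to the population integrand of Assumption B.4 at rate $\OP(\sqrt{\log(p)/n_k})$ (borrowing the concentration argument of Lemma A2 in \citet{xia2021cox}). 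You instead peel off the population-centered martingale $W_N$, whose variation is a within-stratum average of i.i.d.\ bounded terms handled by an ordinary LLN, and dispose of the remainder martingale $R_N$ via Lemma \ref{chap4:lemma:mom} and Lenglart's inequality, with $\langle R_N\rangle(\tau)=\OP(\|\Thetabeta\|_{1,1}^2\log(p)/n_{min})=\oP(1)$ under the stated rate. Both routes consume the same hypotheses; yours trades the uniform convergence of the empirical second-moment process for an extra Lenglart step, and has the minor merit of making explicit that $v(\tau;c)=1$ (via $\sum_k r_k\int_0^\tau\{\mu_{2k}-\mu_{1k}\mu_{1k}^T/\mu_{0k}\}\,d\Lambda_{0k}=\Sigma_{\beta^0}$ and $\Thetabeta\Sigma_{\beta^0}\Thetabeta=\Thetabeta$), a normalization the paper's proof uses implicitly but does not spell out.
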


\begin{proof}[\textbf{Proof of Lemma \ref{chap4:lemma:lead}}]
	We rewrite
	\begin{align}  	\label{chap4:eq:clt_decomp} 
		\displaystyle \frac{- \sqrt{N} c^T \Thetabeta \dot{\ell}(\beta^0)}{\sqrt{c^T \Thetabeta c}}  & =  \displaystyle \frac{1}{\sqrt{N}} \sum_{k=1}^K \sum_{i=1}^{n_k} \frac{c^T \Thetabeta}{\sqrt{c^T \Thetabeta c}}  \left\{  X_{ki} - \frac{\hmu_{1k}(Y_{ki}; \beta^0)}{\hmu_{0k}(Y_{ki}; \beta^0)}  \right\} \delta_{ki} \nonumber \\
		& =  \frac{1}{\sqrt{N}} \sum_{k=1}^K \sum_{i=1}^{n_k} \int_{0}^{\tau} \frac{c^T \Thetabeta}{\sqrt{c^T \Thetabeta c}}  \left\{  X_{ki} - \frac{\hmu_{1k}(t; \beta^0)}{\hmu_{0k}(t; \beta^0)}  \right\} dN_{ki}(t) \nonumber \\
		& = \frac{1}{\sqrt{N}} \sum_{k=1}^K \sum_{i=1}^{n_k} \int_{0}^{\tau} \frac{c^T \Thetabeta}{\sqrt{c^T \Thetabeta c}}  \left\{  X_{ki} - \frac{\hmu_{1k}(t; \beta^0)}{\hmu_{0k}(t; \beta^0)}  \right\} dM_{ki}(t). 
	\end{align}
	Denote $\displaystyle U(t) = \frac{1}{\sqrt{N}} \sum_{k=1}^K \sum_{i=1}^{n_k} \int_{0}^{t} \frac{c^T \Thetabeta}{\sqrt{c^T \Thetabeta c}}  \left\{  X_{ki} - \frac{\hmu_{1k}(s; \beta^0)}{\hmu_{0k}(s; \beta^0)}  \right\} dM_{ki}(s)$. Then the variation process for $U(t)$ is
	\begin{align}
		\langle U \rangle (t) & \displaystyle = \sum_{k=1}^K \sum_{i=1}^{n_k} \frac{1}{N} \int_0^t \frac{c^T \Theta_{\beta^0}}{c^T \Theta_{\beta^0} c} \left\{  X_{ki} - \heta_k(u; \beta^0)  \right\}^{\otimes 2} 1(Y_{ki} \ge u) e^{X_{ki}^T \beta^0} d\Lambda_{0k}(u) \Theta_{\beta^0} c  \nonumber \\
		& = \displaystyle  \frac{c^T \Theta_{\beta^0}}{c^T \Theta_{\beta^0} c} \left[ \sum_{k=1}^K \frac{n_k}{N} \int_0^t \left\{  \hmu_{2k}(u; \beta^0) - \frac{\hmu_{1k}(u; \beta^0) \hmu_{1k}^T(u; \beta^0)}{\hmu_{0k}(u; \beta^0)}  \right\} d\Lambda_{0k}(u) \right] \Theta_{\beta^0} c.
	\end{align}
	Following the proof of Lemma A2 in \citet{xia2021cox}, we have
	\begin{align*}
		& \left\|  \displaystyle \int_0^t \left\{ \hmu_{2k} (u; \beta^0) - \frac{\hmu_{1k}(u; \beta^0) \hmu_{1k}(u; \beta^0)^T}{\hmu_{0k}(u; \beta^0)} \right\} d \Lambda_{0k}(u)  -  \right. \\
		& \qquad \displaystyle \left. \int_0^t \left\{ {\mu}_{2k} (u; \beta^0) - \frac{{\mu}_{1k}(u; \beta^0) {\mu}_{1k}(u; \beta^0)^T}{{\mu}_{0k}(u; \beta^0)} \right\} d \Lambda_{0k}(u)  \right\|_{\infty} = \OP(\sqrt{\log(p) / n_k})
	\end{align*}
	uniformly for all $t \in [0, \tau]$. Then, 
	\begin{align*}
		& \displaystyle  \frac{c^T \Theta_{\beta^0}}{c^T \Theta_{\beta^0} c}  \left[ \sum_{k=1}^K \frac{n_k}{N} \int_0^t \left\{  \hmu_{2k}(u; \beta^0) - \frac{\hmu_{1k}(u; \beta^0) \hmu_{1k}^T(u; \beta^0)}{\hmu_{0k}(u; \beta^0)}  \right\} d\Lambda_{0k}(u) \right] \Theta_{\beta^0} c \\
		= & \frac{c^T \Theta_{\beta^0}}{c^T \Theta_{\beta^0} c}  \left[ \sum_{k=1}^K r_k \int_0^t \left\{  \mu_{2k}(u; \beta^0) - \frac{\mu_{1k}(u; \beta^0) \mu_{1k}^T(u; \beta^0)}{\mu_{0k}(u; \beta^0)}  \right\} d\Lambda_{0k}(u) \right] \Theta_{\beta^0} c ~ + ~ \\
		& \qquad \OP \left\{ \| c \|_1^2 \| \Thetabeta \|_{1,1}^2 \left( \max_k |n_k/N - r_k| + \sqrt{\log(p)/ n_{min}} \right) \right\}. 
	\end{align*}
	By Assumption B.4,  $\langle U \rangle (t) \rightarrow_{P} v(t; c) > 0, ~ t \in [0, \tau]$.
	
	For any $\epsilon > 0$, define $G_{ki}(u) = \displaystyle \frac{1}{\sqrt{N}} \frac{c^T \Thetabeta}{\sqrt{c^T \Thetabeta c}}  \left\{  X_{ki} - \frac{\hmu_{1k}(u; \beta^0)}{\hmu_{0k}(u; \beta^0)}  \right\}$ and the truncated process 
	$
	\displaystyle U_{\epsilon}(t) =  \sum_{k=1}^K \sum_{i=1}^{n_k} \int_{0}^{t} G_{ki}(u)  1(|G_{ki}(u)| > \epsilon) dM_{ki}(u).
	$
	The variation process of $U_{\epsilon}(t)$ is 
	\[
	\langle U_{\epsilon} \rangle (t) = \sum_{k=1}^K \sum_{i=1}^{n_k} \int_{0}^{t} G^2_{ki}(u)  1(|G_{ki}(u)| > \epsilon) dA_{ki}(u),
	\]
	where $d A_{ki}(u) = 1(Y_{ki} \ge u) e^{X_{ki}^T \beta^0} d\Lambda_{0k}(u)$. 
	Since \[
	|  \sqrt{N} G_{ki}(u) | \le a_* \| \Theta_{\beta^0} \|_{1,1} 2M  \{ c^T \Thetabeta c \}^{-1/2}  = \mathcal{O}(\| \Theta_{\beta^0} \|_{1,1}),
	\]
	then $1(|G_{ki}(u)| > \epsilon)  = 0$ almost surely as $ \| \Theta_{\beta^0} \|_{1,1} / \sqrt{N} \asymp \| \Theta_{\beta^0} \|_{1,1} / \sqrt{n_{min}} \to 0$. So $\langle U_{\epsilon} \rangle (t) \rightarrow_{P} 0$. By the martingale central limit theorem, we obtain the desirable result. 
\end{proof}

\begin{lemma} \label{chap4:lemma:lasso}
	Under Assumptions B.1--B.3 and B.5, for $\lambda \asymp \sqrt{\log(p)/n_{min}}$, the lasso estimator $\hbeta$ satisfies 
	\[
	\| \hbeta - \beta^0 \|_1 = \OP(s_0 \lambda), ~ \displaystyle \frac{1}{N} \sum_{k=1}^K \sum_{i=1}^{n_k} |X_{ki}^T (\hbeta - \beta^0)|^2 = \OP(s_0 \lambda^2).
	\]
\end{lemma}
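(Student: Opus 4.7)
The plan is to follow the now-standard lasso analysis for partial likelihood objectives, adapted to the stratified setting, via the basic inequality plus a restricted strong convexity (compatibility) argument. First, from the definition of $\hbeta$ in \eqref{eq:lasso}, the basic inequality reads
\[
\ell(\hbeta) - \ell(\beta^0) \le \lambda \bigl( \|\beta^0\|_1 - \|\hbeta\|_1 \bigr).
\]
Writing $h = \hbeta - \beta^0$ and splitting $h$ by the support $S = \mathrm{supp}(\beta^0)$ with $|S| = s_0$, the right side is bounded above by $\lambda(\|h_S\|_1 - \|h_{S^c}\|_1)$.

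Next, I would Taylor-expand $\ell(\hbeta) - \ell(\beta^0) = \dot{\ell}(\beta^0)^T h + \tfrac{1}{2} h^T \ddot{\ell}(\tilde{\beta}) h$ for some $\tilde{\beta}$ on the segment from $\beta^0$ to $\hbeta$, and control the two pieces separately. For the linear piece, $-\dot{\ell}(\beta^0) = N^{-1} \sum_{k,i} \int_0^{\tau} \{X_{ki} - \hmu_{1k}(t;\beta^0)/\hmu_{0k}(t;\beta^0)\} dM_{ki}(t;\beta^0)$ is a sum of bounded, mean-zero martingale integrals under Assumptions B.1--B.3. A Bernstein-type inequality for martingales applied coordinatewise, together with a union bound over the $p$ coordinates, yields $\|\dot{\ell}(\beta^0)\|_{\infty} = \OP(\sqrt{\log(p)/n_{min}})$. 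Choosing $\lambda \asymp \sqrt{\log(p)/n_{min}}$ so that $\lambda \ge 2 \|\dot{\ell}(\beta^0)\|_{\infty}$ with probability tending to one, the usual manipulation forces $h$ into the cone $\|h_{S^c}\|_1 \le 3 \|h_S\|_1$, and in particular $\|h\|_1 \le 4\|h_S\|_1 \le 4\sqrt{s_0} \|h_S\|_2$.

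For the quadratic piece, the key is to lower-bound $h^T \ddot{\ell}(\tilde{\beta}) h$ on the cone by (a constant times) $\epsilon_0 \|h_S\|_2^2$, where $\epsilon_0$ comes from Assumption B.5. This proceeds in two substeps. First, I would show empirical compatibility at $\beta^0$: using Lemma S.1 to replace $\hmu_{rk}$ by $\mu_{rk}$ uniformly on $[0,\tau]$ with a $\sqrt{\log(p)/n_k}$ error, one gets $\|\hSigma(\beta^0) - \Sigma_{\beta^0}\|_{\infty} = \OP(\sqrt{\log(p)/n_{min}})$, which together with the uniform lower bound $\lambda_{\min}(\Sigma_{\beta^0}) \ge \epsilon_0$ yields the compatibility condition $h^T \ddot{\ell}(\beta^0) h \ge (\epsilon_0/2) \|h_S\|_2^2$ on the cone for $p\sqrt{\log(p)/n_{min}} s_0 = o(1)$. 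Second, to transfer this bound from $\beta^0$ to $\tilde{\beta}$, I would use the self-bounding property of $\ddot{\ell}$: since $\ddot{\ell}(\tilde{\beta}) - \ddot{\ell}(\beta^0)$ is controlled entrywise by $\|X_{ki}\|_{\infty}^3 e^{|X_{ki}^T \tilde{\beta}|} \cdot \|\tilde{\beta} - \beta^0\|_1$, uniform boundedness in Assumptions B.1--B.2 gives $|h^T\{\ddot{\ell}(\tilde{\beta}) - \ddot{\ell}(\beta^0)\}h| \le C\|h\|_1^3$. Substituting back and solving the resulting inequality yields $\|h_S\|_2^2 = \OP(s_0 \lambda^2)$, hence $\|h\|_1 = \OP(s_0 \lambda)$ and the prediction error $N^{-1}\sum_{k,i} |X_{ki}^T h|^2 \asymp h^T \ddot{\ell}(\beta^0) h = \OP(s_0 \lambda^2)$.

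The main obstacle is the self-bounding step, because the Cox partial likelihood is not quadratic and standard quadratic-lasso arguments do not apply off the shelf. One must ensure that the neighborhood of $\beta^0$ in which the compatibility condition transfers to $\tilde{\beta}$ is large enough to contain $\hbeta$ with high probability; this requires a bootstrapping/peeling argument combined with the dimension constraint $s_0 \sqrt{\log(p)/n_{min}} = o(1)$ implicit in the rate claimed. Apart from this, everything reduces to sup-norm concentration of empirical processes handled by Lemma S.1 and the martingale Bernstein bound, both standard under Assumptions B.1--B.3.
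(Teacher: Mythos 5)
Your route is legitimate but it is not the one the paper takes. The paper's proof is essentially a pointer to \citet{kong2014non}: it introduces the intermediate objective $\widetilde{\ell}^{(k)}(\beta)$ in which the at-risk average $\hmu_{0k}(Y_{ki};\beta)$ inside the logarithm is replaced by its population limit $\mu_{0k}(Y_{ki};\beta)$, so that each stratum's loss becomes a sum of i.i.d.\ terms; it then defines the weighted target $\bar\beta$ and the excess risk $\mathcal{E}(\beta)$ and invokes the van-de-Geer-style oracle-inequality machinery (empirical process control over an $\ell_1$ ball plus a quadratic margin condition) from that reference. You instead work directly with $\ell$ via the basic inequality, a second-order Taylor expansion, coordinatewise martingale concentration for $\dot{\ell}(\beta^0)$, and a restricted-strong-convexity transfer. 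Both are standard and both can be made to work; your version exposes the martingale structure explicitly (which the paper only uses elsewhere, e.g.\ in its Lemma S.6), while the paper's i.i.d.-ization device sidesteps the need to Taylor-expand the non-quadratic partial likelihood and lets the stratified extension reduce to bookkeeping over the weights $n_k/N$.

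Two points in your sketch need repair before it is a proof. First, the step you yourself flag as the ``main obstacle'' is genuinely the crux and is not dispatched by the additive bound $|h^T\{\ddot{\ell}(\tilde{\beta})-\ddot{\ell}(\beta^0)\}h|\le C\|h\|_1^3$ alone: that bound is only useful once you already know $\|h\|_1$ is small, so a localization (peeling, or the convex-combination trick of restricting to $t\hbeta+(1-t)\beta^0$ with $\|t h\|_1$ fixed) must actually be carried out, together with an explicit sparsity condition of the form $s_0\sqrt{\log(p)/n_{min}}=o(1)$, which the lemma as stated leaves implicit. Second, your last display asserts $N^{-1}\sum_{k,i}|X_{ki}^T h|^2\asymp h^T\ddot{\ell}(\beta^0)h$. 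This is not justified: $\ddot{\ell}(\beta^0)$ is a sum of weighted \emph{within-risk-set covariance} matrices (centered at $\heta_k$ and weighted by $\delta_{ki}$ and the exponential tilts), not the raw Gram matrix $N^{-1}\sum_{k,i}X_{ki}X_{ki}^{\otimes 2}$, and the two quadratic forms are not equivalent up to constants without an additional argument relating them on the cone (using Assumptions B.1--B.3 to control the weights, plus a restricted upper eigenvalue bound for the raw Gram matrix). As written, the second conclusion of the lemma does not follow from your first conclusion either, since $\|h\|_1=\OP(s_0\lambda)$ with bounded covariates only gives $N^{-1}\sum_{k,i}|X_{ki}^Th|^2=\OP(s_0^2\lambda^2)$.
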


\begin{proof}[\textbf{Proof of Lemma \ref{chap4:lemma:lasso}}]
	This result is adapted from the proof in \citet{kong2014non}, with  modifications as follows. An intermediate replacement for  the negative log-likelihood in the $k$th stratum 
	\[
	\ell^{(k)}(\beta) = \displaystyle - \frac{1}{n_k} \sum_{i=1}^{n_k} \left[ \beta^T X_{ki} - \log \left\{  \frac{1}{n_k} \sum_{j=1}^{n_k} 1(Y_{kj} \ge Y_{ki}) \exp(\beta^T X_{kj})  \right\} \right] \delta_{ki}
	\]
	can be defined as 
	\[
	\widetilde{\ell}^{(k)}(\beta) = - \displaystyle \frac{1}{n_k} \sum_{i=1}^{n_k} \left\{ \beta^T X_{ki} - \log \mu_{0k}(Y_{ki}; \beta) \right\}  \delta_{ki},
	\]
	which is a sum of $n_k$ independent and identically distributed terms.
	The target parameter is $$\bar{\beta} = \argmin_{\beta}  \displaystyle \E \left\{ \sum_{k=1}^K \frac{n_k}{N} \widetilde{\ell}^{(k)}(\beta) \right\}. $$ Then the excess risk for any given $\beta$ is
	\[
	\mathcal{E}(\beta) = \displaystyle \E \left\{ \sum_{k=1}^K \frac{n_k}{N} \widetilde{\ell}^{(k)}(\beta) \right\}  - \displaystyle \E \left\{ \sum_{k=1}^K \frac{n_k}{N} \widetilde{\ell}^{(k)}(\bar\beta) \right\}.
	\]
	We refer remaining details to \citet{kong2014non}.
\end{proof}

\begin{lemma} \label{chap4:lemma:const_sol}
	Under Assumptions B.1--B.3 and B.5 and assume $\lambda \asymp \sqrt{\log(p)/n_{min}}$,  it holds with probability going to one that $\| \Theta_{\beta^0} \hSigma - I_p \|_{\infty} \le \gamma$, with $\gamma \asymp \| \Theta_{\beta^0} \|_{1,1} \{ \max_{1\le k \le K} | n_k/N - r_k | + s_0 \lambda \}$.
\end{lemma}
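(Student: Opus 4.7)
The goal is to exhibit $\Theta_{\beta^0}$ as feasible (up to scale) for the QP \eqref{eq:qp}. Since $\Theta_{\beta^0}\Sigma_{\beta^0}=I_p$, we can write $\Theta_{\beta^0}\hSigma-I_p=\Theta_{\beta^0}(\hSigma-\Sigma_{\beta^0})$. Using the elementary inequality that for any matrices $A,B$ one has $\|AB\|_\infty\le \|A^T\|_{1,1}\|B\|_\infty$, and noting $\Theta_{\beta^0}$ is symmetric so $\|\Theta_{\beta^0}^T\|_{1,1}=\|\Theta_{\beta^0}\|_{1,1}$, the task reduces to bounding $\|\hSigma-\Sigma_{\beta^0}\|_\infty$ at rate $\max_k|n_k/N-r_k|+s_0\lambda$.

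The plan is to telescope through three intermediate quantities:
\begin{align*}
\tilde\Sigma_1 &= \frac{1}{N}\sum_{k,i}\delta_{ki}\{X_{ki}-\heta_k(Y_{ki};\beta^0)\}^{\otimes 2}, \\
\tilde\Sigma_2 &= \frac{1}{N}\sum_{k,i}\delta_{ki}\{X_{ki}-\eta_{k0}(Y_{ki};\beta^0)\}^{\otimes 2}, \\
\tilde\Sigma_3 &= \sum_{k=1}^K \frac{n_k}{N}\Sigma_{\beta^0,k},
\end{align*}
and control each of $\hSigma-\tilde\Sigma_1$, $\tilde\Sigma_1-\tilde\Sigma_2$, $\tilde\Sigma_2-\tilde\Sigma_3$, $\tilde\Sigma_3-\Sigma_{\beta^0}$ in the entrywise max norm. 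For the first piece, expanding $\{X_{ki}-\heta_k(Y_{ki};\hbeta)\}^{\otimes 2}-\{X_{ki}-\heta_k(Y_{ki};\beta^0)\}^{\otimes 2}$ and using bounded covariates (Assumption B.1), bounded linear predictor (Assumption B.2) together with the Lipschitz property of $\heta_k(\cdot;\beta)$ in $\beta$, I would combine this with the lasso rate in Lemma S.3 to get $\mathcal{O}_P(s_0\lambda)$. The second piece, by Assumption B.1 and Lemma S.1, inherits the uniform rate $\sup_{t}\|\heta_k(t;\beta^0)-\eta_{k0}(t;\beta^0)\|_\infty=\mathcal{O}_P(\sqrt{\log p/n_k})$, which propagates entrywise to $\mathcal{O}_P(\sqrt{\log p/n_{\min}})$.

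The third piece, $\tilde\Sigma_2-\tilde\Sigma_3$, is a deviation of an average of independent, uniformly bounded (by Assumptions B.1--B.3) random matrices from its expectation within each stratum, weighted by $n_k/N$. A Hoeffding-type concentration together with a union bound over the $p^2$ coordinates yields $\mathcal{O}_P(\sqrt{\log p/n_{\min}})$. The fourth piece is deterministic: $\|\tilde\Sigma_3-\Sigma_{\beta^0}\|_\infty\le \max_k|n_k/N-r_k|\cdot\max_k\|\Sigma_{\beta^0,k}\|_\infty$, and the latter is $\mathcal{O}(1)$ by Assumption B.1. Under $\lambda\asymp\sqrt{\log p/n_{\min}}$ (so that $\sqrt{\log p/n_{\min}}\le s_0\lambda$ for $s_0\ge 1$), the combined rate is $\mathcal{O}_P(s_0\lambda+\max_k|n_k/N-r_k|)$. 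Multiplying by $\|\Theta_{\beta^0}\|_{1,1}$ delivers the stated bound, so with probability tending to one the constraint in \eqref{eq:qp} is satisfied by $\Theta_{\beta^0}$ for the chosen $\gamma$.

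The main obstacle is the first piece, $\hSigma-\tilde\Sigma_1$, since it requires differentiating the quotient $\hmu_{1k}/\hmu_{0k}$ in $\beta$ and controlling the denominator $\hmu_{0k}$ away from zero uniformly in $t\in[0,\tau]$. The denominator control follows from Assumption B.3 combined with Lemma S.1 (on a high-probability event $\hmu_{0k}(t;\beta^0)$ stays bounded below by a constant multiple of $\pi_0$), while the Lipschitz estimate on $\heta_k(t;\beta)$ in $\beta$ uses Assumptions B.1--B.2 to bound the exponentials and the numerator. The remaining expansion of the outer square produces cross terms such as $N^{-1}\sum_{k,i}\delta_{ki}X_{ki}\{\heta_k(Y_{ki};\beta^0)-\heta_k(Y_{ki};\hbeta)\}^T$, each of whose entrywise max norm is controlled by $\|X_{ki}\|_\infty\cdot \sup_{t}\|\heta_k(t;\cdot)\text{-Lipschitz const}\|\cdot\|\hbeta-\beta^0\|_1=\mathcal{O}_P(s_0\lambda)$. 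Once these routine but tedious manipulations are carried out, the conclusion follows.
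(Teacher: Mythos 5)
Your proposal is correct and follows essentially the same route as the paper's proof: reduce to bounding $\|\hSigma-\Sigma_{\beta^0}\|_{\infty}$ via $\|\Theta_{\beta^0}(\hSigma-\Sigma_{\beta^0})\|_{\infty}\le\|\Theta_{\beta^0}\|_{1,1}\|\hSigma-\Sigma_{\beta^0}\|_{\infty}$, then telescope through the empirical/population weighted-covariate processes at $\hbeta$ and $\beta^0$ using the lasso $\ell_1$ rate, the uniform convergence of $\heta_k$, Hoeffding concentration within strata, and the deterministic $\max_k|n_k/N-r_k|$ term. The paper merely groups your four pieces into two ($a_{N1}$ absorbing your first two, $a_{N2}$ your last two), so the decomposition and all key estimates coincide.
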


Lemma \ref{chap4:lemma:const_sol} shows that when $\gamma \asymp \| \Theta_{\beta^0} \|_{1,1} \{ \max_{1\le k \le K} | n_k/N - r_k | + s_0 \lambda \}$, the $j$th row of $\Theta_{\beta^0}$ ($j=1, \cdots, p$) will be feasible in the constraint of the corresponding quadratic programming problem  with probability going to one.

\begin{proof}[\textbf{Proof of Lemma \ref{chap4:lemma:const_sol}}]
	We first derive the rate for $\| \hSigma - \Sigma_{\beta^0} \|_{\infty}$. Note that 
	\begin{align*}
		& \hspace{-0.4in} \| \hSigma - \Sigma_{\beta^0} \|_{\infty} \\
		\le & \ \left\| \displaystyle \frac{1}{N} \sum_{k=1}^K \sum_{i=1}^{n_k} \int_{0}^{\tau} \left[  \{ X_{ki} - \heta_k(t; \hbeta) \}^{\otimes 2} -  \{ X_{ki} - \eta_{k0}(t; \beta^0) \}^{\otimes 2}  \right] dN_{ki}(t) \right\|_{\infty} \\
		& \quad + \left\| \displaystyle \frac{1}{N} \sum_{k=1}^K \sum_{i=1}^{n_k} \int_{0}^{\tau} \{ X_{ki} - \eta_{k0}(t; \beta^0) \}^{\otimes 2} dN_{ki}(t) - \Sigma_{\beta^0} \right\|_{\infty} \\
		\equiv & \  a_{N1} + a_{N2}. 
	\end{align*}
	By the boundedness Assumption B.1, Lemma \ref{chap4:lemma:mom} and Lemma \ref{chap4:lemma:lasso},
	\begin{align*}
		a_{N1} \le & \left\| \displaystyle \frac{1}{N} \sum_{k=1}^K \sum_{i=1}^{n_k} \int_{0}^{\tau} \{ X_{ki} - \heta_{k}(t; \hbeta) \} \{ \eta_{k0}(t; \beta^0) - \heta_{k}(t; \hbeta) \}^T dN_{ki}(t)  \right\|_{\infty} \\
		& + \left\| \displaystyle \frac{1}{N} \sum_{k=1}^K \sum_{i=1}^{n_k} \int_{0}^{\tau} \{ \eta_{k0}(t; \beta^0) - \heta_{k}(t; \hbeta) \} \{ X_{ki} - \eta_{k0}(t; \beta^0) \}^T dN_{ki}(t)  \right\|_{\infty} \\ 
		\le &  \displaystyle \frac{4M}{N} \sum_{k=1}^K \sum_{i=1}^{n_k} \int_{0}^{\tau} \| \eta_{k0}(t; \beta^0) - \heta_{k}(t; \hbeta) \|_{\infty} dN_{ki}(t) \\
		\le & \displaystyle \frac{4M}{N} \sum_{k=1}^K \sum_{i=1}^{n_k} \int_{0}^{\tau} \| \eta_{k0}(t; \beta^0) - \heta_{k}(t; \beta^0) \|_{\infty} dN_{ki}(t) \\
		& + \displaystyle \frac{4M}{N} \sum_{k=1}^K \sum_{i=1}^{n_k} \int_{0}^{\tau} \| \heta_{k}(t; \beta^0) - \heta_{k}(t; \hbeta) \|_{\infty} dN_{ki}(t) \\
		\le & 4M \OP(\sqrt{\log(p)/ n_{min}}) + 4M \OP(s_0 \lambda) = \OP(s_0 \lambda),
	\end{align*}
	where the last inequality is a result of Lemma \ref{chap4:lemma:mom} and the fact that $\sup_{t \in [0, \tau]} \| \heta_{k}(t; \beta^0) - \heta_{k}(t; \hbeta) \|_{\infty} = \OP(\| \hbeta - \beta^0 \|_1) = \OP(s_0\lambda)$; see the proof of Lemma A4 in \citet{xia2021cox}. Since $\Sigma_{\beta^0} = \sum_{k=1}^K r_k \Sigma_{\beta^0, k}$,
	\begin{align*}
		a_{N2} \le & \displaystyle \left\| \sum_{k=1}^K \frac{n_k}{N} \left[  \frac{1}{n_k} \sum_{i=1}^{n_k} \int_{0}^{\tau} \{ X_{ki} - \eta_{k0}(t; \beta^0) \}^{\otimes 2} dN_{ki}(t) - \Sigma_{\beta^0, k} \right]  \right\|_{\infty} \\
		& + \displaystyle \left\|  \sum_{k=1}^K \left( \frac{n_k}{N} - r_k  \right) \Sigma_{\beta^0,k}  \right\|_{\infty} \\
		\le & \displaystyle  \sum_{k=1}^K \frac{n_k}{N} \left\| \frac{1}{n_k} \sum_{i=1}^{n_k} \int_{0}^{\tau} \{ X_{ki} - \eta_{k0}(t; \beta^0) \}^{\otimes 2} dN_{ki}(t) - \Sigma_{\beta^0, k}   \right\|_{\infty} + \displaystyle \left\|  \sum_{k=1}^K \left( \frac{n_k}{N} - r_k  \right) \Sigma_{\beta^0,k}  \right\|_{\infty}.
	\end{align*}
	The proof of Lemma A4 in \citet{xia2021cox} shows that, for $k=1, \cdots, K$,
	\[
	\displaystyle \left\| \frac{1}{n_k} \sum_{i=1}^{n_k} \int_{0}^{\tau} \{ X_{ki} - \eta_{k0}(t; \beta^0) \}^{\otimes 2} dN_{ki}(t) - \Sigma_{\beta^0, k}   \right\|_{\infty} = \OP(\sqrt{\log(p) / n_k})
	\] 
	by Hoeffding's concentration inequality.
	So $a_{N2} = \OP(\sqrt{\log(p)/ n_{min}}) + \mathcal{O}(\max_{k} | n_k/N - r_k|)$. Then, combining the bounds on $a_{N1}$ and $a_{N2}$, $\| \hSigma - \Sigma_{\beta^0} \|_{\infty} = \OP(s_0 \lambda + \max_k | n_k/N - r_k |)$.
	
	Finally, it is easy to see that
	\[
	\| \Theta_{\beta^0} \hSigma - I_p \|_{\infty} = \| \Theta_{\beta^0} (\hSigma - \Sigma_{\beta^0} ) \|_{\infty} \le \| \Theta_{\beta^0} \|_{1,1} \| \hSigma - \Sigma_{\beta^0} \|_{\infty},
	\]
	and $\| \Theta_{\beta^0} \hSigma - I_p \|_{\infty} = \OP ( \| \Theta_{\beta^0} \|_{1,1} \{ s_0 \lambda + \max_k | n_k/N - r_k | \})$.
\end{proof}

\begin{lemma} \label{chap4:lemma:diff_theta}
	Under the assumptions in Lemma \ref{chap4:lemma:const_sol}, if we further assume for some constant $\epsilon^{\prime} \in (0,1)$, $\limsup_{n_{min} \rightarrow \infty} p \gamma \le 1 - \epsilon^{\prime}$,  then we have $\|  \widehat{\Theta} - \Theta_{\beta^0} \|_{\infty} = \OP(\gamma \| \Theta_{\beta^0} \|_{1,1})$.
\end{lemma}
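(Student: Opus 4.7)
The plan is to bound $\|\hat\Theta - \Theta_{\beta^0}\|_\infty$ column-by-column and then take the maximum over $j$. Write $\theta^{(j)} = \Theta_{\beta^0} e_j$ for the $j$th column of $\Theta_{\beta^0}$; symmetry of $\Theta_{\beta^0}$ and the construction $\hat\Theta = (m^{(1)}, \ldots, m^{(p)})^T$ give $\|\hat\Theta - \Theta_{\beta^0}\|_\infty = \max_{1 \le j \le p} \|m^{(j)} - \theta^{(j)}\|_\infty$. The key identity is
\[
m^{(j)} - \theta^{(j)} = \Theta_{\beta^0}(\Sigma_{\beta^0} m^{(j)} - e_j),
\]
which follows from $\Sigma_{\beta^0}\theta^{(j)} = e_j$. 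Taking the element-wise max norm and using symmetry of $\Theta_{\beta^0}$ to identify $\|\Theta_{\beta^0}\|_{\infty,\infty} = \|\Theta_{\beta^0}\|_{1,1}$,
\[
\|m^{(j)} - \theta^{(j)}\|_\infty \le \|\Theta_{\beta^0}\|_{1,1} \|\Sigma_{\beta^0} m^{(j)} - e_j\|_\infty.
\]

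To bound the residual, I would split $\Sigma_{\beta^0} m^{(j)} - e_j = (\hSigma m^{(j)} - e_j) - (\hSigma - \Sigma_{\beta^0}) m^{(j)}$. The first piece is controlled by $\gamma$ via the QP constraint. For the second, decompose $m^{(j)} = \theta^{(j)} + (m^{(j)} - \theta^{(j)})$ and use Lemma \ref{chap4:lemma:const_sol} to obtain the sharp bound
\[
\|(\hSigma - \Sigma_{\beta^0})\theta^{(j)}\|_\infty = \|(\hSigma - \Sigma_{\beta^0})\Theta_{\beta^0} e_j\|_\infty \le \|\Theta_{\beta^0}\hSigma - I_p\|_\infty \le \gamma,
\]
where symmetry is used to rewrite the $j$th column of $(\hSigma - \Sigma_{\beta^0})\Theta_{\beta^0}$ via $\Theta_{\beta^0}\hSigma - I_p$. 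For the remaining piece, $\|(\hSigma - \Sigma_{\beta^0})(m^{(j)} - \theta^{(j)})\|_\infty \le \|\hSigma - \Sigma_{\beta^0}\|_\infty \cdot p \|m^{(j)} - \theta^{(j)}\|_\infty$, and the proof of Lemma \ref{chap4:lemma:const_sol} already yields $\|\hSigma - \Sigma_{\beta^0}\|_\infty = \OP(\gamma/\|\Theta_{\beta^0}\|_{1,1})$ since $\gamma \asymp \|\Theta_{\beta^0}\|_{1,1}(s_0\lambda + \max_k |n_k/N - r_k|)$.

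Assembling these bounds,
\[
\|m^{(j)} - \theta^{(j)}\|_\infty \le 2 \|\Theta_{\beta^0}\|_{1,1}\gamma + \OP(p\gamma)\cdot \|m^{(j)} - \theta^{(j)}\|_\infty.
\]
The main obstacle is precisely this feedback term: since the QP objective minimizes $m^T\hSigma m$ rather than $\|m\|_1$, there is no a priori control on $\|m^{(j)}\|_1$, and the crude bookkeeping step $\|m^{(j)} - \theta^{(j)}\|_1 \le p \|m^{(j)} - \theta^{(j)}\|_\infty$ forces $\|m^{(j)} - \theta^{(j)}\|_\infty$ to reappear on the right-hand side. The added assumption $\limsup_{n_{min}\rightarrow\infty} p\gamma \le 1 - \epsilon'$ is exactly what keeps the feedback coefficient strictly below one with probability tending to one, allowing the term to be absorbed on the left and yielding $\max_j \|m^{(j)} - \theta^{(j)}\|_\infty = \OP(\gamma \|\Theta_{\beta^0}\|_{1,1})$, which is the claim.
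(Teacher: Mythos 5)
Your argument is correct and is essentially the proof the paper intends: the paper omits this proof, deferring to Lemma A5 of \citet{xia2021cox}, whose argument is exactly your decomposition --- write $m^{(j)} - \theta^{(j)} = \Theta_{\beta^0}(\Sigma_{\beta^0} m^{(j)} - e_j)$, split the residual into the QP-constraint piece, the $(\hSigma - \Sigma_{\beta^0})\theta^{(j)}$ piece controlled by Lemma \ref{chap4:lemma:const_sol}, and the feedback piece absorbed via $\limsup p\gamma \le 1-\epsilon^{\prime}$. The only point worth stating explicitly is that the absorption requires the event $\{ p\, \| \Theta_{\beta^0} \|_{1,1} \| \hSigma - \Sigma_{\beta^0} \|_{\infty} \le 1 - \epsilon^{\prime} \}$ to hold with probability tending to one, which follows from the bound $\| \hSigma - \Sigma_{\beta^0} \|_{\infty} = \OP(s_0\lambda + \max_k |n_k/N - r_k|)$ in the proof of Lemma \ref{chap4:lemma:const_sol} under the same convention used there, namely that the implicit constant in $\gamma \asymp \| \Theta_{\beta^0} \|_{1,1}\{ \max_k |n_k/N - r_k| + s_0\lambda\}$ dominates the $\OP$ constant.
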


The proof of Lemma \ref{chap4:lemma:diff_theta} follows that of Lemma A5 in \citet{xia2021cox}, thus is omitted.

\begin{lemma} \label{chap4:lemma:max_score}
	Under Assumptions B.1--B.3, for each $t > 0$, 
	\[
	P (\| \dot{\ell}(\beta^0) \|_{\infty} > t) \le 2Kpe^{- C n_{min}t^2},
	\]
	where $C>0$ is an absolute constant.
\end{lemma}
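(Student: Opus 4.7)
The plan is to rewrite each coordinate of $\dot{\ell}(\beta^0)$ as a sum of bounded martingale stochastic integrals indexed by stratum, apply a Bernstein/Freedman-type concentration inequality for counting-process martingales within each stratum, and finish with union bounds over the $K$ strata and the $p$ coordinates.

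First, I would use the standard score identity. Writing $\dell(\beta^0)$ as $-N^{-1}\sum_{k,i}\int_0^\tau\{X_{ki}-\heta_k(t;\beta^0)\}\,dN_{ki}(t)$ and noticing that for each fixed $t$,
\[
\sum_{i=1}^{n_k}\{X_{ki}-\heta_k(t;\beta^0)\}\,1(Y_{ki}\ge t)e^{X_{ki}^T\beta^0}=n_k\hmu_{1k}(t;\beta^0)-n_k\hmu_{1k}(t;\beta^0)=0,
\]
the compensator part vanishes and I can replace $dN_{ki}$ by $dM_{ki}(\cdot;\beta^0)$. Thus for every coordinate $j=1,\ldots,p$ and stratum $k=1,\ldots,K$,
\[
[\dell(\beta^0)]_j=-\sum_{k=1}^K\frac{1}{N}U_{k,j},\qquad U_{k,j}:=\sum_{i=1}^{n_k}\int_0^\tau\bigl\{X_{ki,j}-[\heta_k(t;\beta^0)]_j\bigr\}\,dM_{ki}(t;\beta^0).
\]
Because $\heta_k$ is a convex combination of $X_{kj}$'s, the integrand is predictable with respect to $\mathcal F(t)$ and bounded in absolute value by $2M$ by Assumption B.1.

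Second, I would apply a Freedman (Bernstein-type) inequality for locally square-integrable martingales with bounded jumps. The jumps of $U_{k,j}$ are bounded by $2M$, and its predictable variation satisfies
\[
\langle U_{k,j}\rangle(\tau)\;\le\;(2M)^2\sum_{i=1}^{n_k}\int_0^\tau 1(Y_{ki}\ge u)\,e^{X_{ki}^T\beta^0}\,d\Lambda_{0k}(u)\;\le\;4M^2 e^{M_1}\,n_k\,\Lambda_{0k}(\tau),
\]
using Assumptions B.1--B.3; in particular $\Lambda_{0k}(\tau)$ is a finite deterministic constant on the bounded follow-up window. Hence the predictable variation is at most $Cn_k$ for an absolute constant $C$ (depending only on $M,M_1$ and $\max_k\Lambda_{0k}(\tau)$), and Freedman's inequality yields, for all $s>0$ small enough,
\[
P\bigl(|U_{k,j}/n_k|>s\bigr)\;\le\;2\exp\!\left(-\frac{n_k s^2}{2C+2(2M)s/3}\right)\;\le\;2\exp(-C_1 n_k s^2),
\]
for an absolute constant $C_1>0$, uniformly in $k,j$ for $s$ in a bounded range (which covers every regime of interest for the conclusion).

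Third, I combine the $K$ strata. If $\|\dell(\beta^0)\|_\infty>t$, then for some coordinate $j$, $\sum_{k}(n_k/N)|U_{k,j}/n_k|>t$, so there exists a stratum $k$ with $|U_{k,j}/n_k|>Nt/(Kn_k)\ge t/K$. Applying the previous step with $s=Nt/(Kn_k)$ and using $N^2/n_k\ge n_{min}$,
\[
P\bigl(|U_{k,j}/n_k|>Nt/(Kn_k)\bigr)\le 2\exp(-C_1 n_k\cdot N^2 t^2/(K^2n_k^2))\le 2\exp(-C n_{min} t^2),
\]
with $C=C_1/K^2$, since $K$ is fixed. A union bound over $k=1,\ldots,K$ and $j=1,\ldots,p$ then yields
\[
P\bigl(\|\dell(\beta^0)\|_\infty>t\bigr)\;\le\;2Kp\,e^{-Cn_{min}t^2},
\]
which is the stated bound.

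The main obstacle is the concentration step: one must invoke a Freedman/Bernstein-type inequality valid for discontinuous martingales built from counting processes with a predictable, uniformly bounded integrand, and verify that the predictable variation is $\mathcal O(n_k)$ under Assumptions B.1--B.3. The martingale decomposition and the vanishing of the compensator are essentially algebraic; the bookkeeping to reduce to $n_{min}$ in the exponent uses only $K$ being fixed and $n_k\le N$.
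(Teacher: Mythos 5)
Your argument is correct and follows essentially the same route as the paper: decompose $\dot{\ell}(\beta^0)=\sum_{k=1}^K (n_k/N)\,\dot{\ell}^{(k)}(\beta^0)$, concentrate each stratum-specific score coordinate via a martingale Bernstein-type tail bound, and union-bound over the $K$ strata and $p$ coordinates. The only differences are cosmetic: the paper obtains the per-stratum bound by citing Lemma 3.3(ii) of \citet{huang2013oracle} rather than deriving it from Freedman's inequality as you do, and it uses the sharper observation that a weighted average with weights summing to one exceeding $t$ forces $\|\dot{\ell}^{(k)}(\beta^0)\|_\infty>t$ for some $k$, whereas your pigeonhole only extracts $t/K$ and thus costs a factor $K^{-2}$ in the constant $C$ --- harmless since $K$ is fixed.
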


\begin{proof}[\textbf{Proof of Lemma \ref{chap4:lemma:max_score}}]
	Since $\dot{\ell}(\beta^0) = \sum_{k=1}^K \frac{n_k}{N} \dot{\ell}^{(k)}(\beta^0)$, we have
	\begin{align*}
		P \left( \| \dot{\ell}(\beta^0) \|_{\infty} > t \right) & \le P \left( \sum_{k=1}^K \frac{n_k}{N} \| \dot{\ell}^{(k)}(\beta^0) \|_{\infty} > t \right) \\
		& \le \sum_{k=1}^K P \left( \| \dot{\ell}^{(k)}(\beta^0) \|_{\infty} > t \right) \\
		& \le \sum_{k=1}^K 2pe^{- C n_k t^2}.
	\end{align*}
	Note that $\| X_{ki} - \heta_k(t; \beta^0) \|_{\infty} \le 2M$ holds uniformly for all $k$ and $i$.
	Then the last inequality is a direct result of Lemma 3.3(ii) in \citet{huang2013oracle} when applied to each of the $K$ strata. 
\end{proof}

\section{Proofs of Main Theorems}

\begin{proof}[\textbf{Proof of Theorem 3.1}]
	Let $\dot{\ell}_j(\beta)$ be the $j$th element of the derivative $\dot{\ell}(\beta)$. By the mean value theorem, there exists $\widetilde{\beta}^{(j)}$ between $\hbeta$ and $\beta^0$ such that $\dot{\ell}_j(\hbeta) - \dot{\ell}_j(\beta^0) =  \left. \frac{\partial \dot{\ell}_j(\beta)}{\partial \beta^T} \right|_{\beta=\widetilde{\beta}^{(j)}} (\hbeta - \beta^0)$. Denote the $p \times p$ matrix $D = \left( \left. \frac{\partial \dot{\ell}_1(\beta)}{\partial \beta} \right|_{\beta=\widetilde{\beta}^{(1)}}, \cdots, \left. \frac{\partial \dot{\ell}_p(\beta)}{\partial \beta} \right|_{\beta=\widetilde{\beta}^{(p)}}   \right)^T$.
	By the definition of the de-biased estimator $\widehat{b}$, we may decompose $c^T (\widehat{b} - \beta^0) $ as
	\begin{align*} 
		c^T (\widehat{b} - \beta^0) & =  - c^T \Thetabeta \dot{\ell}(\beta^0) - c^T (\widehat{\Theta} - \Thetabeta) \dot{\ell}(\beta^0) \nonumber \\
		& \quad - c^T (\widehat{\Theta} \hSigma - I_p) (\hbeta - \beta^0) + c^T \hTheta (\hSigma -  D) (\hbeta - \beta^0) \\
		& = - c^T \Thetabeta \dot{\ell}(\beta^0) + (i) + (ii) + (iii),
	\end{align*} 
	where $(i) = - c^T (\widehat{\Theta} - \Thetabeta) \dot{\ell}(\beta^0), (ii) = - c^T (\widehat{\Theta} \hSigma - I_p) (\hbeta - \beta^0)$ and $(iii) = c^T \hTheta (\hSigma -  D) (\hbeta - \beta^0)$.
	
	We first show $\sqrt{N}(i) = \oP(1)$ and $\sqrt{N}(ii) = \oP(1)$.
	By Lemma \ref{chap4:lemma:diff_theta} and Lemma \ref{chap4:lemma:max_score},
	\begin{align*}
		| \sqrt{N}(i) | & \le \sqrt{N} \| c\|_1 \cdot \| \widehat{\Theta} - \Thetabeta \|_{\infty, \infty} \cdot \| \dot{\ell}(\beta^0) \|_{\infty} \\
		& \le \sqrt{N} a_{*} \OP( p \gamma \| \Thetabeta \|_{1,1}) \OP(\sqrt{\log(p)/n_{min}}) \\
		& = \OP( \| \Thetabeta \|_{1,1} p \gamma \sqrt{\log(p)}  ) \\
		& = \oP(1),
	\end{align*}
	where the last equality is a direct result of the assumption that $\| \Theta_{\beta^0} \|_{1,1}^2 \{ \max_k|n_k/N-r_k| + s_0\lambda \} p \sqrt{\log(p)} \to 0$ when $\lambda\asymp \sqrt{\log(p)/n_{min}}$.
	By Lemma \ref{chap4:lemma:lasso},
	\begin{align*}
		| \sqrt{N}(ii) | & \le \sqrt{N} \| c\|_1 \| (\widehat{\Theta} \widehat{\Sigma} - I_p) (\hbeta - \beta^0) \|_{\infty} \\
		& \le \sqrt{N} a_* \| \hTheta \widehat{\Sigma}  - I_p \|_{\infty} \| \hbeta - \beta^0 \|_1 \\
		& \le \sqrt{N} a_* \gamma \| \hbeta - \beta^0 \|_1 \\
		& = \OP(\sqrt{N} \gamma s_0 \lambda) \\
		& \le  \OP( \sqrt{N} \| \Thetabeta \|_{1,1} \{ \max_k |n_k/N - r_k| + s_0 \lambda \} p \sqrt{\log(p)/n_{min}} ) \\
		& = \oP(1).
	\end{align*}
	
	We then show that $\sqrt{N}(iii) = \oP(1)$.
	Note that $\hSigma - D = (\hSigma - \Sigma_{\beta^0}) + (\Sigma_{\beta^0} - \ddot{\ell}(\beta^0)) + (\ddot{\ell}(\beta^0) - D)$. By the proof of Lemma \ref{chap4:lemma:const_sol}, we see that with $\lambda \asymp \sqrt{\log(p) / n_{min}}$, $\| \hSigma - \Sigma_{\beta^0} \|_{\infty} = \OP(s_0 \lambda + \max_k | n_k/ N - r_k |)$. 
	Based on the proof of Theorem 1 in \citet{xia2021cox}, for each stratum, $\| \ddot{\ell}^{(k)}(\beta^0) - D^{(k)} \|_{\infty} = \OP(\sqrt{\log(p)/n_k})$, where
	$D^{(k)} = \left( \left. \frac{\partial \dot{\ell}^{(k)}_1(\beta)}{\partial \beta} \right|_{\beta=\widetilde{\beta}^{(1)}}, \cdots, \left. \frac{\partial \dot{\ell}^{(k)}_p(\beta)}{\partial \beta} \right|_{\beta=\widetilde{\beta}^{(p)}}   \right)^T$. Since the overall negative log partial likelihood $\displaystyle \ell(\beta) = \sum_{k=1}^K \frac{n_k}{N} \ell^{(k)}(\beta)$, and $D= \displaystyle \sum_{k=1}^K \frac{n_k}{N} D^{(k)}$, then $\| \ddot{\ell}(\beta^0) - D \|_{\infty} = \OP(\sqrt{\log(p)/n_{min}})$. Also, $\| \Sigma_{\beta^0, k} - \ddot{\ell}^{(k)}(\beta^0) \|_{\infty} = \OP(\sqrt{\log(p)/n_k})$ by the proof of Theorem 1 in \citet{xia2021cox}. Then 
	\begin{align*}
		\| \Sigma_{\beta^0} - \ddot{\ell}(\beta^0) \|_{\infty} & \le \left\| \sum_{k=1}^K r_k \Sigma_{\beta^0, k} - \sum_{k=1}^K \frac{n_k}{N} \Sigma_{\beta^0, k} \right\|_{\infty} + \left\| \sum_{k=1}^K \frac{n_k}{N} \Sigma_{\beta^0, k} - \sum_{k=1}^K \frac{n_k}{N} \ddot{\ell}^{(k)}(\beta^0) \right\|_{\infty} \\
		& \le K \max_k(| n_k / N - r_k | \| \Sigma_{\beta^0, k} \|_{\infty}) + K \OP(\sqrt{\log(p)/n_{min}}) \\
		& = \OP(\max_k |n_k/N - r_k| + \sqrt{\log(p)/n_{min}}).
	\end{align*}
	Therefore, for $\lambda \asymp \sqrt{\log(p)/n_{min}}$, $\| \hSigma - D \|_{\infty} = \OP(s_0 \lambda + \max_k |n_k/N - r_k|)$, and
	\begin{align*}
		| \sqrt{N} (iii) | & \le \sqrt{N} \| c \|_1 \| \hTheta \|_{\infty, \infty} \| \hSigma - D \|_{\infty} \| \hbeta - \beta^0 \|_1 \\
		& \le \OP\left(\sqrt{N} \| \Theta_{\beta^0} \|_{1,1} (s_0 \lambda +  \max_k | n_k/N - r_k |) \right) s_0 \lambda \\
		& \le \OP\left( \sqrt{N /n_{min}} \| \Theta_{\beta^0} \|_{1,1} (s_0 \lambda +  \max_k | n_k/N - r_k | ) p \sqrt{\log(p) } \right)\\
		& = \oP(1).
	\end{align*}
	
	Finally, for the variance,
	\begin{align*}
		| c^T ( \widehat{\Theta}  - \Thetabeta) c | & \le \| c \|_1^2 \| \widehat{\Theta}  - \Thetabeta \|_{\infty} \\
		& \le a_*^2 \OP(\gamma \| \Thetabeta \|_{1,1}) = \oP(1). 
	\end{align*}
	By Slutsky's theorem and Lemma \ref{chap4:lemma:lead}, $\sqrt{n} c^T (\widehat{b} - \beta^0) / (c^T \hTheta c)^{1/2} \overset{\mathcal{D}}{\rightarrow} N(0,1)$.
\end{proof}

\begin{proof}[\textbf{Sketch proof of Theorem 3.4}] Theorem 3.4 can be easily proved using Cram\'{e}r-Wold device. For any $\omega \in \mR^m$, since the dimension of $\omega$ is a fixed integer, we can invoke Theorem 3.1 by taking $c=J^T \omega$. Note that $\| J^T \omega \|_1 \le \| J^T \|_{1,1} \| \omega \|_1 = \| J \|_{\infty, \infty} \| \omega \|_1 = \mathcal{O}(1)$.
\end{proof}



\end{document}